\documentclass[10pt,twocolumn,twoside]{IEEEtran}
\usepackage{generic}
\usepackage{cite}
\usepackage{amsmath, amssymb, amsfonts}
\usepackage{amsthm}
\usepackage{algorithmic}
\usepackage{graphicx}
\usepackage{textcomp}
\usepackage{hyperref}
\newtheorem{theorem}{Theorem}
\newtheorem{lemma}{Lemma}

\begin{document}
\title{Consensus in Multi-Agent Systems with Uniform and Nonuniform Communication Delays}%: Theory and Experimental Validation on a Swarm of QBOT3 Robots}
%Nonuniform Delay Consensus in Multi-Agent Systems: Convergence Analysis and QBOT3 Robot Experiments
%Distributed Consensus with Nonuniform Delays: Stability, Convergence, and QBOT3 Robot Implementation
%Consensus Algorithms for Multi-Agent Systems with Heterogeneous Delays: Theory and QBOT3 Swarm Validation
\author{Shokoufeh Naderi, Maude J. Blondin, \IEEEmembership{Member, IEEE}, Sébastien Roy
\thanks{}
\thanks{All authors are with the Department of Electrical Engineering and Computer Engineering, Université de Sherbrooke, Sherbrooke, QC, Canada.}
\thanks{Contacts for the authors: Shokoufeh.Naderi@USherbrooke.ca, Maude.Blondin2@USherbrooke.ca, Sebastien.Roy13@USherbrooke.ca.}
%\thanks{This work was supported by the Natural Sciences and Engineering Research Council of Canada (NSERC) - Discovery Grant.}}
\thanks{This work has been submitted to the IEEE for possible publication. Copyright may be transferred without notice, after which this version may no longer be accessible.}}

\maketitle

\begin{abstract}
This paper analyzes consensus in multi-agent systems under uniform and nonuniform communication delays, a key challenge in distributed coordination with applications to robotic swarms. It investigates the convergence of a consensus algorithm accounting for delays across communication links in a connected, undirected graph. Novel convergence results are derived using Rouché's theorem and Lyapunov-based stability analysis. The system is shown to reach consensus at a steady-state value given by a weighted average determined by the delay distribution, with stability ensured under explicit parameter bounds. Both uniform and nonuniform delay scenarios are analyzed, and the corresponding convergence values are explicitly derived. The theoretical results are validated through simulations, which explore the impact of delay heterogeneity on consensus outcomes. Furthermore, the algorithm is implemented and experimentally tested on a swarm of QBOT3 ground robots to solve the rendezvous problem, demonstrating the agents’ ability to converge to a common location despite realistic communication constraints, thus confirming the algorithm’s robustness and practical applicability. The results provide guidelines for designing consensus protocols that tolerate communication delays, offer insights into the relationship between network delays and coordination performance, and demonstrate their applicability to distributed robotic systems.
\end{abstract}

\begin{IEEEkeywords}
Multi-Agent Systems, Consensus Algorithms, Nonuniform Delays, Distributed Optimization, Robotic Swarms, Rendezvous Problem
\end{IEEEkeywords}

\section{Introduction}
\label{sec:introduction}

Multi-agent systems have become a powerful approach for solving complex optimization and consensus problems in a distributed manner. In recent years, researchers have extended distributed optimization methods to handle multi-objective problems, in which agents seek to optimize multiple objective functions simultaneously. One such example is the algorithm proposed by Blondin and Hale \cite{blondin2020algorithm}, which enables a network of agents to collaboratively explore Pareto-optimal solutions by exchanging information weighted by each agent's prioritization of others' objectives. That algorithm was proven to converge under ideal, delay-free communication. In practical networks, however, communication delays are unavoidable due to transmission and processing latencies. These delays can degrade or even prevent the convergence of distributed algorithms if not properly accounted for. This paper extends the algorithm of Blondin and Hale by analyzing its convergence properties under constant communication delays. Two delay scenarios are considered: uniform delays, where all communication links have the same constant delay, and nonuniform delays, where different links may have different (but fixed) delays. The focus is on establishing rigorous convergence guarantees under these delays and quantifying how delays affect the algorithm's performance and final outcomes.

The effect of communication delays on consensus and distributed optimization has been extensively studied in the control and multi-agent systems (MASs) literature. Early works showed that consensus (agreement to a common value) can still be achieved in the presence of bounded time delays, though under more restrictive conditions. For instance, Olfati-Saber and Murray showed in \cite{olfati2004consensus} that for continuous-time consensus, if the interaction topology remains connected over time, the agents can reach agreement despite uniform communication delays. Subsequent studies in discrete-time systems provided more quantitative conditions. %In particular, researchers derived delay-dependent stability criteria---often in the form of upper bounds on the allowable delay---to guarantee convergence. 
Bliman and Ferrari-Trecate \cite{bliman2008average} obtained exact delay bounds for average consensus on fixed networks, showing how large a single delay, or multiple constant delays, can be before consensus on the average of the initial states is lost. Similarly, Wang et al. derived an upper bound on delay tolerance for consensus of identical agents \cite{wang2013consensus}. These works, however, generally assumed a uniform delay on all communication links for analytical tractability.

Researchers have gradually relaxed the uniform-delay assumption, recognizing that equal delays on all links are an unrealistic simplification. For instance, Rehák and Lynnyk \cite{rehak2020consensus} showed that for consensus of identical agents with heterogeneous input delays, exact synchronization cannot, in general, be achieved; instead, a bounded steady-state disagreement remains. Similarly, Lin and Jia \cite{lin2009consensus} analyzed second-order consensus protocols in discrete-time systems with nonuniform communication delays and dynamically changing topologies, deriving convergence conditions using properties of nonnegative matrices.

The increased complexity of nonuniform delays often necessitates more advanced analytical tools, such as Lyapunov methods or spectral analysis, to establish stability guarantees. To analyze the convergence of the multi-objective algorithm under delays, this paper employs two complementary analytical techniques: Rouché's theorem from complex analysis and Lyapunov-based stability analysis. In the uniform-delay case, an application of Rouché's theorem to the multi-agent domain is presented. Rouché's theorem has been used in stability analysis of time-delay differential equations \cite{breda2005pseudospectral}, but to our knowledge, it has not been applied to prove convergence of distributed optimization or consensus algorithms. By expressing the algorithm's delayed-update dynamics as a characteristic equation in the complex domain, Rouché's theorem is used to show that all eigenvalues of the update mapping remain inside the unit circle under a certain step-size condition. This approach yields an exact stability condition for the uniform delay case. 

For the more general nonuniform-delay scenario, a Lyapunov-based analysis is used. Lyapunov stability theory is a common tool for handling time-varying delays in networked systems \cite{olfati2004consensus}. A Lyapunov function is designed to account for the delayed information at each communication link, and a Lyapunov-based linear matrix inequality (LMI) condition is used to show that the consensus error decays to zero. 

Another key contribution of this work is determining the exact non-average convergence value of the algorithm under delays. In classical average consensus algorithms without delays, all agents converge to the uniform average of their initial values \cite{bliman2008average, blondin2020algorithm}. Under delays, prior work typically still proves convergence to an average of the initial values \cite{olfati2004consensus, liu2011distributed, olfati2007consensus, spanos2005dynamic, wang2017consensus, li2019consensus} due to the self-delays considered in their update rules. In reality, an agent has essentially zero delay in accessing its own state, while communication delays occur only on links between different agents. Removing this artificial self-delay---while more realistic---breaks the symmetry that preserves the uniform average, making the analysis significantly more challenging and typically resulting in convergence to a weighted (non-uniform) average. %\cite{tian2008consensus, papachristodoulou2010effects}. 
In contrast to prior studies that often retain self-delays, this work explicitly models zero self-delays. This modeling choice preserves convergence while providing a more realistic delay model. The exact (weighted) convergence value is then rigorously derived for both uniform and nonuniform delay cases.

For the uniform-delay case on a regular graph, an explicit and constructive upper bound on the step size is derived that guarantees convergence. While many prior works on consensus and distributed optimization require the step size to be sufficiently small for stability, the resulting conditions are typically implicit — expressed in terms of the spectral radius of the iteration matrix \cite{olshevsky2009convergence, olshevsky2011convergence}, the maximum eigenvalue of the Laplacian \cite{xiao2004fast, nedic2010constrained}, or Lipschitz constants of local objectives \cite{nedic2009distributed, shi2015extra} — and require eigenvalue computations or iterative verification. In contrast, our bound is closed-form, depends only on easily computable graph invariants (maximum and average degree) and the delay, and does not require solving eigenvalues, making it suitable for analysis and implementation in large-scale networks.

Finally, all theoretical results are validated through simulations and physical experiments on a multi-robot testbed. The delayed-information consensus algorithm is implemented on a set of Quanser QBot3 mobile robots that communicate over Wi-Fi with induced constant latency to solve the rendezvous problem. The experimental results confirm robustness to communication delays.

The rest of the paper is organized as follows. Section~\ref{sec:preliminaries} reviews the multi-objective multi-agent optimization algorithm from \cite{blondin2020algorithm} and introduces our delay models and problem formulation. Section~\ref{sec:main_results} presents the convergence analysis for the uniform delay case using Rouché’s theorem, the nonuniform delay case using a Lyapunov-based analysis, and the derivation of explicit convergence values and the step size bound. Section~\ref{sec:results} provides simulation and experimental results that validate the theoretical findings. Finally, Section~\ref{sec:conclusion} concludes the paper and outlines future directions.

\section{Preliminaries and Problem Description} \label{sec:preliminaries}
Consider a MAS comprising $ n $ agents that interact over a connected, undirected graph $ \mathcal{G} = (\mathcal{V}, \mathcal{E}) $, where $ \mathcal{V} = \{1, \ldots, n\} $ represents the set of agents, and $ \mathcal{E} \subseteq \mathcal{V} \times \mathcal{V} $ denotes the set of communication links. An edge $ (i, j) \in \mathcal{E} $ indicates that agents $ i $ and $ j $ are neighbors and can exchange information directly. The graph has no self-loops, i.e., $ (i, i) \notin \mathcal{E} $ for all $ i \in \mathcal{V} $, though each agent has access to its own data. The adjacency matrix $ H(\mathcal{G}) = [h_j^i] \in \mathbb{R}^{n \times n} $ is defined with entries $ h_j^i = 1 $ if $ (i, j) \in \mathcal{E} $, and 0 otherwise. The degree of agent $ i $, denoted $ \deg(i) $, is the number of its neighbors, given by $ \deg(i) = \sum_{j=1}^n h_j^i $. The degree matrix $ \Delta(\mathcal{G}) \in \mathbb{R}^{n \times n} $ is diagonal, with $ \Delta^i_i = \deg(i) $ for $ i = 1, \ldots, n $, and the maximum degree is $ \Delta_{\max} = \max_{i \in [n]} \deg(i) $. The Laplacian matrix of the graph is defined as $ L(\mathcal{G}) = \Delta(\mathcal{G}) - H(\mathcal{G}) $, capturing the network’s connectivity properties. For the remainder of this paper, the graph notation $ \mathcal{G} $ is omitted, and the Laplacian, degree, and adjacency matrices are simply denoted as $ L $, $ \Delta $, and $ H $, respectively. Throughout the paper, for a matrix $ M = [m_j^i] $, the notation $ m_j^i $ (or $ [M]_j^i $) denotes the entry in the $i$-th row and $j$-th column of $M$.

Each agent $ i $ maintains a priority vector $ w^i(k) = [w_1^i(k), w_2^i(k), \ldots, w_n^i(k)] \in \mathbb{R}^{1 \times n} $ at time step $ k $, which evolves according to the update rule
\begin{equation} \label{eq:update_rule}
    w^i(k+1) = w^i(k) + c \sum_{j=1}^n h_j^i (w^j(k - \tau_{ij}) - w^i(k))
\end{equation}
for $ k \geq \tau_{\max} $, with initial conditions $ w^i(k) = w^i(0) \in \mathbb{R}^n $ for $ k = 0, 1, \ldots, \tau_{\max} - 1 $. Here, $ \tau_{\max} = \max_{(i,j) \in \mathcal{E}} \tau_{ij} $ is the maximum communication delay, and $ \tau_{ij} $ is the nonnegative integer delay from agent $ j $ to agent $ i $, satisfying $ \tau_{ij} = \tau_{ji} $ (due to the undirected nature of the graph) and $ \tau_{ii} = 0 $. The priority vectors are used to weigh local objective functions in a multi-objective optimization (MOO) problem, formulated as
\begin{equation} \label{eq:optimization}
    \min_{\mathbf{x} \in \mathbf{A}} \mathbf{f(x)} = \sum_{i=1}^n w_i f_i(\mathbf{x}),
\end{equation}
where $ \mathbf{f(x)} $ is the global objective function, $ f_i: \mathbb{R}^m \to \mathbb{R} $ is the local objective function known only to agent $ i $, and $m$ is the number of decision variables. $ \mathbf{A} \subseteq \mathbb{R}^m $ is the feasible region defined by the problem constraints, and $ \mathbf{x} \in \mathbf{A} $ represents a feasible solution. The weight $ w_i = [w^i]_i $ reflects the priority of $ f_i $. %, with $ \sum_{i=1}^n w_i = 1 $.

The objective is to achieve consensus on the priority vectors, such that $ w^i(k) \to \alpha \in \mathbb{R}^{1 \times n} $ for all $ i $, enabling the agents to collaboratively solve the optimization problem \eqref{eq:optimization} with agreed-upon weights. However, the presence of nonuniform communication delays $ \tau_{ij} $, which vary across edges, poses a significant challenge to the consensus process, as delayed information exchange disrupts coordination among agents. Traditional consensus algorithms often assume uniform delays or instantaneous communication, which limits their effectiveness in practical scenarios like robotic swarms, where delays are heterogeneous due to network constraints. This paper addresses the problem of achieving consensus in MASs under uniform and nonuniform delays, analyzing the convergence properties and steady-state behavior of the system defined by \eqref{eq:update_rule}, with practical validation through the rendezvous problem in a swarm of QBOT3 ground robots.

\section{Main Results} \label{sec:main_results}
This section presents the core theoretical findings on the consensus behavior of the MAS described in the previous section, focusing on the effects of communication delays. 
The following theorems establish the exact consensus value and its properties. To enhance clarity and simplify the analysis, the most basic case involving a uniform delay of $ \tau = d $ on a regular graph is first presented in Theorem~\ref{thm:delay_consensus_d1}. This result is then extended to handle nonuniform delays in Theorem~\ref{thm:delay_consensus_nonuniform}. Prior to stating the theorems, Lemma~\ref{lem:vector_update_nonuniform_delay} and Lemma~\ref{lem:conservation_nonuniform} provide foundational results that are essential for the subsequent proofs.

%The following assumption is made on the local objective functions:
%\begin{assumption} %\label{assumption1}
%    For all $i\in\{1,...,n\}$, the function $f_i:\mathbb R^n \to \mathbb R$ is continuously differentiable and convex.
%\end{assumption}

\begin{lemma} \label{lem:vector_update_nonuniform_delay}
    Consider the MAS described in the previous section, where each agent $ i $ updates its priority vector $ w^i(k) $ according to \eqref{eq:update_rule}. Define the network state matrix $ W(k) = [w^1(k); w^2(k); \ldots; w^n(k)] \in \mathbb{R}^{n \times n} $. Then, the update rule in matrix form is
    \begin{equation} \label{lemma1_Dynamics}
        W(k+1) = (I - c \Delta) W(k) + c \sum_{m=1}^{\tau_{\max}} H_m W(k - m),
    \end{equation}
    where $ I $ is the $ n \times n $ identity matrix, and $ H_m = [h_j^i \cdot \mathbf{1}_{\{\tau_{ij} = m\}}] $ is the adjacency matrix for edges with delay $ m $, for $ m = 1, \ldots, \tau_{\max} $, with $ \mathbf{1}_{\{\tau_{ij} = m\}} = 1 $ if $ \tau_{ij} = m $, and 0 otherwise.
\end{lemma}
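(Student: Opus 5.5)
The plan is to expand the scalar update \eqref{eq:update_rule} row by row and regroup its terms by delay value. The matrix identity \eqref{lemma1_Dynamics} then follows by comparing rows. Since the $i$-th row of $W(k)$ is $w^i(k)$, it suffices to show that, for every $i$, the $i$-th row of the right-hand side of \eqref{lemma1_Dynamics} equals the right-hand side of \eqref{eq:update_rule}.

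First, I split the sum in \eqref{eq:update_rule} into a non-delayed part and a delayed part:
\begin{equation*}
c\sum_{j=1}^n h_j^i \bigl(w^j(k-\tau_{ij}) - w^i(k)\bigr) = -c\Bigl(\sum_{j=1}^n h_j^i\Bigr) w^i(k) + c\sum_{j=1}^n h_j^i\, w^j(k-\tau_{ij}).
\end{equation*}
Using $\deg(i)=\sum_j h_j^i$, the first part combines with $w^i(k)$ to give $(1-c\deg(i))\,w^i(k)$. Because $\Delta$ is diagonal with $\Delta^i_i=\deg(i)$, this is exactly the $i$-th row of $(I-c\Delta)W(k)$.

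Second, I partition the neighbors of $i$ according to the value of $\tau_{ij}$. For every edge $(i,j)\in\mathcal{E}$ we have $1\le \tau_{ij}\le\tau_{\max}$, so $h_j^i=\sum_{m=1}^{\tau_{\max}} h_j^i\mathbf{1}_{\{\tau_{ij}=m\}}$. Inserting this and exchanging the two finite sums gives
\begin{equation*}
\sum_{j} h_j^i w^j(k-\tau_{ij}) = \sum_{m=1}^{\tau_{\max}}\sum_{j}[H_m]^i_j\, w^j(k-m),
\end{equation*}
where I have replaced $\tau_{ij}$ by $m$ inside each indicator-restricted term. The inner sum is precisely the $i$-th row of $H_mW(k-m)$, by the definition of matrix multiplication with $W$ stacked from the rows $w^j$. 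Multiplying by $c$ and adding the first part yields the $i$-th row of \eqref{lemma1_Dynamics}. The identity holds for all $k\ge\tau_{\max}$, where every index $k-m$ is at least $0$ and is covered by the initial conditions.

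The only delicate point is the range of $m$. If some edge had $\tau_{ij}=0$, then a term $cH_0W(k)$ would be needed. I would therefore state explicitly that inter-agent link delays are taken to be positive integers; zero delay applies only to the self term $\tau_{ii}=0$, and that term is already absorbed into $(I-c\Delta)W(k)$ because $h_i^i=0$. Equivalently, the lemma can be read with the sum starting at $m=0$ and the convention that $H_0=0$ whenever all link delays are positive. With this caveat settled, the rest of the argument is routine bookkeeping.
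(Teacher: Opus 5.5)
Your proposal is correct and follows the paper's proof essentially step for step. The paper also rewrites $-c\sum_j h_j^i w^i(k)$ as the $i$-th row of $-c\Delta W(k)$, and it partitions the delayed neighbor sum by delay value $m$ to obtain the rows of $c\sum_{m} H_m W(k-m)$. Your caveat about positive link delays is apt: the paper simply asserts that $\tau_{ij}$ ranges from $1$ to $\tau_{\max}$, even though the problem setup only calls the delays nonnegative.
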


\begin{proof}
    To derive the matrix form of the update rule, the agent-level dynamics are transformed into a network-level representation, utilizing the adjacency structure defined by $ H_m $ and the degree matrix $ \Delta $.

    Begin with the agent-level update rule defined in \eqref{eq:update_rule}. Since $ w^i(k) $ is independent of $ j $,  the second term inside the summation, $ -c \sum_{j=1}^n h_j^i w^i(k) $, simplifies to
    \begin{equation}
        \sum_{j=1}^n h_j^i w^i(k) = w^i(k) \sum_{j=1}^n h_j^i = \deg(i) w^i(k),
    \end{equation}
    where $ \deg(i) = \sum_{j=1}^n h_j^i $ is the degree of node $ i $, quantifying the number of neighbors. $ h_j^i = 1 $ if $ (i, j) \in E $ and 0 otherwise (noting that $ h_i^i = 0 $). In matrix form, this term across all agents becomes $ -c \Delta W(k) $, where $ \Delta $ is the diagonal matrix with $ \Delta_i^i = \deg(i) $, and the $ i $-th row of $ \Delta W(k) $ is $ \deg(i) w^i(k) $.

    Now, consider the delayed term $ c \sum_{j=1}^n h_j^i w^j(k - \tau_{ij}) $, which aggregates contributions from neighbors with varying delays $ \tau_{ij} $. Since $ \tau_{ij} $ ranges from 1 to $ \tau_{\max} $, this sum is partitioned by delay values. For each delay $ m $ from 1 to $ \tau_{\max} $, the contribution from neighbors with $ \tau_{ij} = m $ is $ c \sum_{j: \tau_{ij} = m} h_j^i w^j(k - m) $ \footnote{$ \sum_{j: \tau_{ij} = m} h_j^i w^j(k - m) $ includes only those neighbors $ j $ of $ i $ where the delay $ \tau_{ij} = m $. For example, $ c H_1 W(k - 1) $ represents the influence from neighbors with delay $ \tau_{ij} = 1 $. The $ i $-th row of $ H_1 W(k - 1) $ is
    \begin{equation}
    [H_1 W(k - 1)]^i = \sum_{j=1}^n [H_1]_{j}^i w^j(k - 1) = \sum_{j: \tau_{ij} = 1} h_j^i w^j(k - 1).
    \end{equation}
    This is the weighted sum of the previous state $ w^j(k - 1) $ for all $ j $ connected to $ i $ with a delay of 1.}. The total delayed influence is thus
    \begin{equation}
        c \sum_{j=1}^n h_j^i w^j(k - \tau_{ij}) = c \sum_{m=1}^{\tau_{\max}} \sum_{j: \tau_{ij} = m} h_j^i w^j(k - m).
    \end{equation}
    This double summation groups the neighbor contributions by their respective delay lags. In matrix notation, the inner sum $ \sum_{j: \tau_{ij} = m} h_j^i w^j(k - m) $ corresponds to the $ i $-th row of $ H_m W(k - m) $, where $ H_m = [h^i_j \cdot \mathbf{1}_{\{\tau_{ij} = m\}}] $ is the adjacency matrix isolating edges with delay $ m $, and $ \mathbf{1}_{\{\tau_{ij} = m\}} $ is an indicator function ensuring only the appropriate $ j $ contribute. The outer sum over $ m $ from 1 to $ \tau_{\max} $ aggregates all delayed terms, yielding $ c \sum_{m=1}^{\tau_{\max}} H_m W(k - m) $, where each $ H_m W(k - m) $ represents the influence from neighbors at lag $ m $, weighted by the network structure at that delay.

    Combining these terms, the agent-level update becomes
    \begin{equation}
        w^i(k+1) = w^i(k) + c \sum_{m=1}^{\tau_{\max}} [H_m W(k - m)]^i - c [\Delta W(k)]^i.
    \end{equation}
    Aggregating across all agents, the network state matrix $ W(k) = [w^1(k); w^2(k); \ldots; w^n(k)] $, where $ W(k)^i_j = [w^i(k)]_j $ is the $ j $-th component of agent $ i $'s priority vector, evolves according to
    \begin{equation}
        W(k+1) = W(k) + c \sum_{m=1}^{\tau_{\max}} H_m W(k - m) - c \Delta W(k).
    \end{equation}
    This can be equivalently expressed as
    \begin{equation} \label{eq:matrix_form}
        W(k+1) = (I - c \Delta) W(k) + c \sum_{m=1}^{\tau_{\max}} H_m W(k - m).
    \end{equation}
    
    %In the special case of uniform delays with $ \tau_{ij} = \tau $ for all $ (i, j) \in E $, the update rule simplifies. If $ \tau = 1 $, then $ \tau_{\max} = 1 $, and the indicator function $ \mathbf{1}_{\{\tau_{ij} = m\}} $ is 1 only when $ m = 1 $, making $ H_m = 0 $ for all $ m \neq 1 $, and $ H_1 = H $, the full adjacency matrix. The summation in \eqref{eq:matrix_form} reduces to $ c \sum_{m=1}^{\tau_{\max}} H_m W(k - m) = c H W(k - 1) $, and the update rule becomes:
    %\begin{equation} \label{eqlemma1}
    %    W(k+1) = (I - c \Delta) W(k) + c H W(k - 1),
    %\end{equation}
    %which corresponds to the matrix form for a uniform delay of 1, consistent with the dynamics of a system where all communication delays are identical and equal to one time step.
\end{proof}

\begin{lemma}
\label{lem:conservation_nonuniform}
    For the system described in \eqref{lemma1_Dynamics}, define the augmented state vector $ z^i(k) \in \mathbb{R}^{n (\tau_{\max} + 1) \times 1} $ as
    \[
        z^i(k) = [w^i(k)^T, w^i(k-1)^T, \ldots, w^i(k - \tau_{\max})^T]^T.
    \]
    The system dynamics are governed by
    \[
        z^i(k+1) = A z^i(k),
    \]
    where the state transition matrix $A \in \mathbb{R}^{n (\tau_{\max} + 1) \times n (\tau_{\max} + 1)}$ is
    \[
        A = \begin{bmatrix}
            I - c \Delta & c H_1 & c H_2 & \cdots & c H_{\tau_{\max}} \\
            I & 0 & 0 & \cdots & 0 \\
            0 & I & 0 & \cdots & 0 \\
            \vdots & \vdots & \ddots & \ddots & \vdots \\
            0 & 0 & \cdots & I & 0
        \end{bmatrix},
    \]
    where $H = \sum_{l=1}^{\tau_{\max}} H_l$ the adjacency matrix, and $H_l$ the adjacency matrix for edges with delay $l$ (i.e., $(H_l)_j^i = 1$ if $\tau_{ij} = l$, and 0 otherwise). Define the vector $ V \in \mathbb{R}^{n (\tau_{\max} + 1)} $ as
    \begin{equation} \label{eq:Eigenvector}
        V = \left[ \beta \mathbf{1}^T, \beta c \mathbf{1}^T \sum_{l=1}^{\tau_{\max}} H_l, \beta c \mathbf{1}^T \sum_{l=2}^{\tau_{\max}} H_l, \ldots, \beta c \mathbf{1}^T H_{\tau_{\max}} \right]^T,
    \end{equation}
    where $\beta \neq 0$ is a scalar, and $\mathbf{1} = [1, \ldots, 1]^T \in \mathbb{R}^n$ is the column vector of ones. In the rest of the paper, as is the case here, \(\mathbf{1}\) denotes a column vector of ones with dimension \(n\) unless otherwise specified. Then, $V$ is a left eigenvector of $A$ with eigenvalue 1 (i.e., $V^T A = V^T$), and the quantity
    \[
        V^T z^i(k) = \beta \mathbf{1}^T w^i(k) + \beta c \sum_{m=1}^{\tau_{\max}} \left( \mathbf{1}^T \sum_{l=m}^{\tau_{\max}} H_l \right) w^i(k-m)
    \]
    is conserved, implying that $V^T z^i(k) = V^T z^i(0)$ for all $k \geq 0$.
\end{lemma}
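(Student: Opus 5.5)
The plan is a direct block-by-block verification that $V^T A = V^T$, followed by a one-line induction for the conservation law.

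\emph{Indexing convention.} Since $A$ acts by left multiplication and the dynamics \eqref{lemma1_Dynamics} act on $W(k)$ from the left, the stacked vector $z^i(k)$ should be read as stacking the $i$-th \emph{column} of $W(k), W(k-1), \ldots, W(k-\tau_{\max})$. That column is the $i$-th component of all agents' priority vectors. With this reading, Lemma~\ref{lem:vector_update_nonuniform_delay} gives $z^i(k+1) = A z^i(k)$ immediately:
\begin{itemize}
\item the first block row of $A$ reproduces \eqref{lemma1_Dynamics};
\item the identity blocks on the subdiagonal perform the shift $W(k-m) \mapsto W(k-m)$ one slot down.
\end{itemize}

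\emph{Block decomposition.} Write $T = \tau_{\max}$ and partition $V^T = [v_0, v_1, \ldots, v_T]$ into row blocks of length $n$, with
\[
v_0 = \beta \mathbf{1}^T, \qquad v_m = \beta c\, \mathbf{1}^T \sum_{l=m}^{T} H_l \quad (1 \le m \le T).
\]
Then $V^T A$ is computed one block column at a time, and each block is checked against $v_m$.

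\emph{Block column $0$.} This block is $v_0(I - c\Delta) + v_1 = \beta \mathbf{1}^T - \beta c\, \mathbf{1}^T \Delta + \beta c\, \mathbf{1}^T H$. This is the only step that uses structure beyond bookkeeping. Because the graph is undirected and $\tau_{ij} = \tau_{ji}$, each $H_l$ is symmetric, so $H$ is symmetric. Hence $\mathbf{1}^T H = (H\mathbf{1})^T = (\deg(1), \ldots, \deg(n)) = \mathbf{1}^T \Delta$. The two $c$-terms therefore cancel and the block equals $v_0$. I expect this cancellation, i.e.\ the row-sum/column-sum identity, to be the main point to state carefully. Without symmetry of the delays the claim would fail.

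\emph{Block columns $1 \le m \le T-1$.} Only the first block row and the identity in block row $m+1$ contribute. This gives
\[
v_0\, c H_m + v_{m+1} = \beta c\, \mathbf{1}^T H_m + \beta c\, \mathbf{1}^T \sum_{l=m+1}^{T} H_l = v_m,
\]
which is a telescoping of the tail sums.

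\emph{Block column $T$.} Only the first block row contributes, giving $v_0\, c H_T = \beta c\, \mathbf{1}^T H_T = v_T$.

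Thus $V^T A = V^T$. Since $\beta \neq 0$, $V \neq 0$, so $V$ is a genuine left eigenvector with eigenvalue $1$.

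\emph{Conservation.} We have $V^T z^i(k+1) = V^T A z^i(k) = V^T z^i(k)$, and induction on $k$ gives $V^T z^i(k) = V^T z^i(0)$. Expanding the inner product blockwise, $V^T z^i(k) = v_0 w(k) + \sum_{m=1}^{T} v_m w(k-m)$, which is exactly the displayed conserved expression. Here the induction runs from the first time the recursion applies; the constant initial history makes the starting value well defined.
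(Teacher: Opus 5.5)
Your proof is correct and follows the paper's proof step for step. Like the paper, you verify $V^T A = V^T$ block column by block column, using $\mathbf{1}^T\Delta = \mathbf{1}^T H$ in block column $0$ and telescoping tail sums elsewhere, then get conservation by induction; your reading of $z^i(k)$ as stacking the $i$-th \emph{column} of $W$ is, moreover, the reading under which $z^i(k+1) = A z^i(k)$ actually holds, unlike the literal row-stacking of $w^i(k)^T$ in the statement. One small correction: the cancellation in block column $0$ uses only the symmetry of $H = \sum_l H_l$ (undirectedness), not $\tau_{ij} = \tau_{ji}$, so $V^T A = V^T$ survives asymmetric delays on an undirected graph, and your remark that it would fail without delay symmetry is too strong.
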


\begin{proof}
    %The goal is to show that the quantity $V^T z^i(k)$ remains constant over time for each agent $i$. This indicates that a weighted sum of the agent’s states (current and delayed) is preserved under the system dynamics. %The augmented state $z^i(k)$ stacks the current state $w^i(k)$ and its delayed versions up to $\tau_{\max}$, allowing us to model nonuniform delays in a unified state-space form. The matrix $A$ captures how the current state updates via a consensus step (using the degree matrix $\Delta$ and adjacency matrices $H_l$) while shifting delayed states forward.
    To establish conservation, it is first verified that $V$ is a left eigenvector of $A$ with eigenvalue 1. This means $V^T A = V^T$, implying that applying $A$ to $z^i(k)$ does not change the projection along $V^T$. Expressing $V^T$ in block form, where each block is a row vector in $\mathbb{R}^{1 \times n}$, gives
    \[
        V^T = \left[ \beta \mathbf{1}^T, \beta c \mathbf{1}^T \sum_{l=1}^{\tau_{\max}} H_l, \beta c \mathbf{1}^T \sum_{l=2}^{\tau_{\max}} H_l, \ldots, \beta c \mathbf{1}^T H_{\tau_{\max}} \right].
    \]
    The matrix $A$ has $\tau_{\max} + 1$ block rows and columns, where each block is of size $n \times n$. We can then determine $V^T A$ by multiplying each block of $V^T$ by the corresponding column blocks of $A$, as follows:
    \begin{align*}
        V^T A &= [ \beta \mathbf{1}^T, \beta c \mathbf{1}^T \sum_{l=1}^{\tau_{\max}} H_l, \ldots, \beta c \mathbf{1}^T \sum_{l=\tau_{\max}-1}^{\tau_{\max}} H_l, \\
        &\beta c \mathbf{1}^T H_{\tau_{\max}} ]
        \begin{bmatrix}
            I - c \Delta & c H_1 & c H_2 & \cdots & c H_{\tau_{\max}} \\
            I & 0 & 0 & \cdots & 0 \\
            0 & I & 0 & \cdots & 0 \\
            \vdots & \vdots & \ddots & \ddots & \vdots \\
            0 & 0 & \cdots & I & 0
        \end{bmatrix}.
    \end{align*}
    
    \begin{itemize}
        \item \textbf{First component (block 1)}: This corresponds to the first block column of $A$, $[I - c \Delta, I, 0, \ldots, 0]^T$:
        \begin{align*}
            &\beta \mathbf{1}^T (I - c \Delta) + \beta c \left( \mathbf{1}^T \sum_{l=1}^{\tau_{\max}} H_l \right) I \\
            %&+ \beta c \left( \mathbf{1}^T \sum_{l=2}^{\tau_{\max}} H_l \right) \cdot 0 + \cdots + \beta c \left( \mathbf{1}^T H_{\tau_{\max}} \right) \cdot 0 \\
            & = \beta \mathbf{1}^T - \beta c \mathbf{1}^T \Delta + \beta c \mathbf{1}^T \sum_{l=1}^{\tau_{\max}} H_l.
        \end{align*}
        Since $ H = \sum_{l=1}^{\tau_{\max}} H_l$ (the degree matrix counts all edges incident to each node, aggregated across all delays), we have $\mathbf{1}^T \Delta = \mathbf{1}^T H = \mathbf{1}^T \sum_{l=1}^{\tau_{\max}} H_l$. Thus, $ -\beta c \mathbf{1}^T \Delta + \beta c \mathbf{1}^T \sum_{l=1}^{\tau_{\max}} H_l = 0$. So, the first component is $
        \beta \mathbf{1}^T $, matching the first block of $V^T$.
    
        \item \textbf{Second component (block 2)}: This corresponds to the second block column of $A$, $[c H_1, 0, I, 0, \ldots, 0]^T$:
        \begin{align*}
            &\beta \mathbf{1}^T (c H_1) %+ \beta c \left( \mathbf{1}^T \sum_{l=1}^{\tau_{\max}} H_l \right) \cdot 0 
            + \beta c \left( \mathbf{1}^T \sum_{l=2}^{\tau_{\max}} H_l \right) I \\
            %&+ \beta c \left( \mathbf{1}^T \sum_{l=3}^{\tau_{\max}} H_l \right) \cdot 0 + \cdots + \beta c \left( \mathbf{1}^T H_{\tau_{\max}} \right) \cdot 0 \\
            %= \beta c \mathbf{1}^T H_1 + \beta c \mathbf{1}^T \sum_{l=2}^{\tau_{\max}} H_l 
            &= \beta c \mathbf{1}^T \left( H_1 + \sum_{l=2}^{\tau_{\max}} H_l \right) 
            = \beta c \mathbf{1}^T \sum_{l=1}^{\tau_{\max}} H_l,
        \end{align*}
        which matches the second block of $V^T$.
    
        \item \textbf{Third component (block 3)}: This corresponds to the third block column, $[c H_2, 0, 0, I, 0, \ldots, 0]^T$:
        \begin{align*}
            &\beta \mathbf{1}^T (c H_2) %+ \beta c \left( \mathbf{1}^T \sum_{l=1}^{\tau_{\max}} H_l \right) \cdot 0 + \beta c \left( \mathbf{1}^T \sum_{l=2}^{\tau_{\max}} H_l \right) \cdot 0 \\
            + \beta c \left( \mathbf{1}^T \sum_{l=3}^{\tau_{\max}} H_l \right) I \\%+ \cdots + \beta c \left( \mathbf{1}^T H_{\tau_{\max}} \right) \cdot 0 \\
            &= \beta c \mathbf{1}^T H_2 
            + \beta c \mathbf{1}^T \sum_{l=3}^{\tau_{\max}} H_l  \\
            &= \beta c \mathbf{1}^T \left( H_2 + \sum_{l=3}^{\tau_{\max}} H_l \right) = \beta c \mathbf{1}^T \sum_{l=2}^{\tau_{\max}} H_l,
        \end{align*}
        matching the third block of $V^T$.
    
        \item \textbf{General $m$-th component ($m = 2, \ldots, \tau_{\max}$)}: The $m$-th block column of $A$ has $c H_{m-1}$ in the first block, $I$ in the $(m-1)$-th block (if $m-1 \geq 2$), and zeros elsewhere:
        \begin{align*}
            &\beta \mathbf{1}^T (c H_{m-1}) + \beta c \left( \mathbf{1}^T \sum_{l=m}^{\tau_{\max}} H_l \right) I \\
            &= \beta c \mathbf{1}^T H_{m-1} 
            + \beta c \mathbf{1}^T \sum_{l=m}^{\tau_{\max}} H_l = \beta c \mathbf{1}^T \sum_{l=m-1}^{\tau_{\max}} H_l,
        \end{align*}
        which matches the $m$-th block of $V^T$.
    
        \item \textbf{Final component (block $\tau_{\max} + 1$)}: The last block column of $A$ is $[c H_{\tau_{\max}}, 0, \ldots, 0]^T$, so
        \[
            \beta \mathbf{1}^T (c H_{\tau_{\max}}) + 0 + \cdots + 0 = \beta c \mathbf{1}^T H_{\tau_{\max}},
        \]
        matching the last block of $V^T$.
    \end{itemize}
    
    Thus, $V^T A = V^T$, confirming that $V$ is a left eigenvector of $A$ with eigenvalue 1. %This property is crucial because it suggests that the projection of the state along $V$ may be invariant under the dynamics.
    
    Given the system dynamics $z^i(k+1) = A z^i(k)$, apply $V^T$ to both sides
    \[
        V^T z^i(k+1) = V^T (A z^i(k)) = (V^T A) z^i(k).
    \]
    Since $V^T A = V^T$, this becomes
    \[
        V^T z^i(k+1) = V^T z^i(k).
    \]
    This recurrence holds for all $k$. Starting at $k = 0$, we have
    \[
        V^T z^i(1) = V^T z^i(0), \; V^T z^i(2) = V^T z^i(1) = V^T z^i(0), \; \ldots,
    \]
    so by induction, $V^T z^i(k) = V^T z^i(0)$ for all $k \geq 0$. The quantity $V^T z^i(k)$ is therefore conserved.
    
    The conserved quantity is
    \begin{align*}
        V^T z^i(k) &= \beta \mathbf{1}^T w^i(k) + \beta c \left( \mathbf{1}^T \sum_{l=1}^{\tau_{\max}} H_l \right) w^i(k-1) \\
        &+ \beta c \left( \mathbf{1}^T \sum_{l=2}^{\tau_{\max}} H_l \right) w^i(k-2) + \cdots \\
        &+ \beta c \left( \mathbf{1}^T H_{\tau_{\max}} \right) w^i(k - \tau_{\max}),
    \end{align*}
    which can be written as
    \[
        V^T z^i(k) = \beta \mathbf{1}^T w^i(k) + \beta c \sum_{m=1}^{\tau_{\max}} \left( \mathbf{1}^T \sum_{l=m}^{\tau_{\max}} H_l \right) w^i(k-m).
    \]
    This expression weighs the current state and delayed states by factors that depend on the graph structure and delays, reflecting the influence of delayed interactions in the consensus process.
\end{proof}

In the next two subsections, the convergence proofs and consensus values for MASs with uniform delays and non-uniform delays are presented. The uniform delay proof (Lemma~\ref{lem:convergence_uniform_delay}) assumes a regular graph, where all agents have the same degree, enabling spectral analysis via Rouché’s theorem to derive a precise step size condition for stability. This assumption simplifies the eigenvalue analysis by ensuring the degree matrix \( \Delta \) commutes with the Laplacian’s eigenvectors, making \( V^T \Delta V \) diagonal, which is particularly applicable to structured networks such as rings or grids.

The non-uniform delay proof (Lemma~\ref{lem:Lyapunov_Stability_with_LMI}) applies to general graph topologies without requiring regularity and employs Lyapunov theory to establish asymptotic stability. While this approach covers uniform delays as a special case, it does not use the structural properties of regular graphs to provide an explicit step size condition. Presenting both proofs offers complementary insights: the uniform-delay proof provides a precise design parameter for regular graphs, while the non-uniform-delay proof ensures robustness for arbitrary graphs and heterogeneous delays, addressing a broader range of practical scenarios in distributed systems.

\subsection{Uniform delays}
\begin{lemma} \label{lem:convergence_uniform_delay}
    Consider the MAS in Lemma~\ref{lem:vector_update_nonuniform_delay} with a uniform delay $ \tau_{ij} = d $ for all $ (i, j) \in \mathcal{E} $ and $ k \geq 1 $. The system evolves according to the dynamics
    \begin{equation}
        W(k+1) = (I - c \Delta) W(k) + c H W(k - d),
    \end{equation}
    where $ w^i(k) = w^i(0) $ for $ k = -d + 1, \ldots, -1, 0 $, and $ \delta_1 = \frac{1}{n} \sum_{k=1}^n \deg(k) $ is the average degree of the graph. If $ 0 < c < \min\left( \frac{1}{d\delta_1}, \frac{2}{\Delta_{\max}} \right) $, then the system converges to a consensus, i.e., for each component $ j = 1, \ldots, n $ and some $ \alpha_j \in \mathbb{R} $, $ w_j(k) \to \alpha_j \mathbf{1} $ as $ k \to \infty $. Here, $ w_j(k) = [w_j^1(k),\ w_j^2(k),\ \dots,\ w_j^n(k)]^T \in \mathbb{R}^n $ denotes the $j$-th column of the state matrix $W(k)$, so $w_j(k) \to \alpha_j \mathbf{1}$ means that, as $ k \to \infty $, all agents share the same value $\alpha_j$ for the $j$-th objective (i.e., $w_j^1(k) = w_j^2(k) = \cdots = w_j^n(k) = \alpha_j$).
\end{lemma}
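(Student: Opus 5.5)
We work column by column. Since the dynamics act on every column of $W(k)$ in the same way, it suffices to fix $j$ and study the scalar-delay system $x(k+1)=(I-c\Delta)x(k)+cHx(k-d)$ with $x=w_j$. In the augmented form of Lemma~\ref{lem:conservation_nonuniform} (with $H_d=H$ and $H_m=0$ for $m\neq d$), the eigenvalues of $A$ are exactly the roots of
\[
\chi(z)=\det M(z),\qquad M(z)=z^{d+1}I-z^{d}(I-c\Delta)-cH .
\]
Consensus will follow once three facts are established. First, $z=1$ is a simple root of $\chi$, with right eigenvector $[\mathbf{1}^T,\dots,\mathbf{1}^T]^T$. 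Second, every other root satisfies $|z|<1$. Third, the decomposition $A^k\to \mathbf{1}_{(d+1)}\otimes\mathbf{1}\,V^T/(V^T(\mathbf{1}_{(d+1)}\otimes\mathbf{1}))$ then gives $x(k)\to\alpha_j\mathbf{1}$. Here $V$ is the left eigenvector of Lemma~\ref{lem:conservation_nonuniform}, and $\mathbf{1}_{(d+1)}\in\mathbb{R}^{d+1}$ is a vector of ones, so the right eigenvector $\mathbf{1}_{(d+1)}\otimes\mathbf{1}$ is just the stacked vector above. Since $V^T(\mathbf{1}_{(d+1)}\otimes\mathbf{1})=\beta(n+cd\,n\delta_1)\neq0$, the projection is well defined and $\alpha_j$ is explicit.

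\textbf{Simplicity of $z=1$.} We have $M(1)=cL$, whose kernel is spanned by $\mathbf{1}$ because $\mathcal{G}$ is connected. Since $L$ is symmetric, the left kernel is also spanned by $\mathbf{1}^T$. By the standard criterion, the root is simple iff $\mathbf{1}^TM'(1)\mathbf{1}\neq0$. Here $M'(1)=(d+1)I-d(I-c\Delta)=I+cd\Delta$, so $\mathbf{1}^TM'(1)\mathbf{1}=n(1+cd\delta_1)>0$. This is where the average degree $\delta_1$ first enters.

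\textbf{Roots on or outside the unit circle.} Suppose $|z|\ge1$, $z\neq1$, and $M(z)v=0$ with $v\neq0$. Taking the row $i$ where $|v_i|$ is maximal gives
\[
|z|^d\,|z-1+c\deg(i)|\le c\deg(i).
\]
With $a=c\deg(i)\in(0,2)$ (which uses $c<2/\Delta_{\max}$), the disc $|z-(1-a)|\le a$ meets $\{|z|\ge1\}$ only at $z=1$ when $a\le1$. This excludes all such roots when $c\Delta_{\max}\le1$.

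For the remaining regime we use a continuity/Rouché argument in $c$. At $c=0$ the roots are $1$ (multiplicity $n$) and $0$ (multiplicity $nd$). For small $c>0$, the $n$ roots near $1$ behave like $1-c\lambda_k(L)/(1+cd\,\cdot)+O(c^2)$, so all except one move strictly inside the disc. Applying Rouché's theorem on a small circle around $1$, and on $|z|=1+\epsilon$, shows that the root count inside the unit disc can change only if a root crosses $|z|=1$. We therefore rule out crossings $z=e^{i\theta}$, $\theta\neq0$, for all admissible $c$. Let $\|v\|=1$, and set $p=v^*\Delta v$ and $q=v^*Hv$, both real with $|q|\le p$. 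Multiplying $M(e^{i\theta})v=0$ by $e^{-id\theta}v^*$ gives
\[
(e^{i\theta}-1)+cp=c\,q\,e^{-id\theta}.
\]
The imaginary part is $\sin\theta=-cq\sin(d\theta)$. Combined with $|\sin d\theta|\le d|\sin\theta|$, this yields $1\le cd|q|$ whenever $\sin\theta\ne0$. We will then combine this with the real part, $(\cos\theta-1)+cp=cq\cos d\theta$, to reach a contradiction under $cd\delta_1<1$. The case $\theta=\pi$ is handled separately using $c<2/\Delta_{\max}$: it reduces to $(-2+cp)=\pm cq$, which is impossible since $cp+c|q|\le 2c\Delta_{\max}\cdot\tfrac12<2$ after using $p+|q|\le 2\Delta_{\max}$ more carefully.

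\textbf{Main obstacle.} The hard step is the last inequality: the Rayleigh quotients naturally give $|q|\le p\le\Delta_{\max}$, not $\delta_1$. To bring in the average degree, we first decompose $v=\gamma\mathbf{1}/\sqrt n+v_\perp$. On the $\mathbf{1}$ direction, $p=\delta_1$ exactly. On $v_\perp$ we use $q=p-v^*Lv$, with $v^*Lv\ge\lambda_2\|v_\perp\|^2$, so that the $Lv$ term from the real-part equation absorbs the deviation. If this splitting does not close in full generality, we fall back on the regular-graph case, where $\delta_1=\Delta_{\max}$ and the bound $1\le cd\,\delta_1$ contradicts the hypothesis directly.
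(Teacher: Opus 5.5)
\textbf{The gap.} Your argument does not cover the regime $1/\Delta_{\max}\le c<\min(1/(d\delta_1),2/\Delta_{\max})$, and it cannot be made to: for non-regular graphs the lemma is false there.

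The $\theta=\pi$ step is where it breaks. From $p+|q|\le 2\Delta_{\max}$ you only get $c(p+|q|)<4$, not $<2$. For odd $d$ one has $M(-1)=2I-c(\Delta+H)$, which is singular at $c=2/\lambda_{\max}(\Delta+H)$. For the star $K_{1,m}$ this value is $2/(m+1)$, which is strictly below $2/\Delta_{\max}$.

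Concretely, take $K_{1,3}$ with $d=1$. Then $\delta_1=3/2$ and $\Delta_{\max}=3$, so the hypothesis permits every $c<2/3$; take $c=0.6$. Restrict $M(z)$ to its invariant subspace of vectors that are constant on the leaves. There $\det M(z)$ contains the factor $f(z)=(z^2+0.8z)(z^2-0.4z)-1.08$. Since $f(-1)=-0.8<0$ and $f(z)\to+\infty$ as $z\to-\infty$, $A$ has a real eigenvalue in $(-1.25,-1.2)$. Generic initial data, including constant-history data, therefore diverge.

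Consequently the $v=\gamma\mathbf{1}/\sqrt n+v_\perp$ splitting you sketch under ``main obstacle'' cannot succeed. The restriction to regular graphs is necessary, not a fallback. This matches the paper's effective scope: its proof assumes $V^T\Delta V$ is diagonal in the Laplacian eigenbasis, which holds only for regular graphs, in order to split $\det M(\lambda)$ into scalar polynomials $p_i$.

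\textbf{The regular case.} Here you do not need the continuation/Rouch\'e argument at all. For an $r$-regular graph the hypothesis reads $c<1/(dr)$, so $c\deg(i)=cr<1/d\le 1$. Your max-row disc argument then already places every root other than $z=1$ strictly inside the unit circle. Note that this argument needs $a=c\deg(i)<1$ strictly; at $a=1$ the disc is the whole closed unit disc. Together with your computation $\mathbf{1}^TM'(1)\mathbf{1}=n(1+cd\delta_1)\neq 0$, this is a complete proof.

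\textbf{Comparison with the paper's route.} The paper factors the consensus-mode polynomial as $(\lambda-1)q(\lambda)$ and bounds the roots of $q$ using $cd\delta_1<1$. It then applies Rouch\'e on each remaining Laplacian mode with $h(\lambda)=\lambda^d(\lambda-1+c\delta_i)$ and $g=-ch_i$. Your argument needs no modal decoupling. It also avoids the paper's tacit assumption $h_i=v_{i,L}^THv_{i,L}>0$, which fails on bipartite regular graphs: there $h_n=-r$, so $|g|=|h|$ at $\lambda=1$ and the strict Rouch\'e inequality fails. The paper's factorization, in turn, shows where the $1/(d\delta_1)$ threshold comes from.

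As a by-product, your disc argument gives a valid delay-independent result for every connected graph: convergence to consensus whenever $0<c<1/\Delta_{\max}$.
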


\begin{proof}
    See Appendix I.
\end{proof}

\subsubsection{Effect of step size $ c $ and comparison of stability bounds}
The step size $ c $ in the consensus algorithm update rule, given by \eqref{eq:update_rule} plays a critical role in balancing convergence speed and stability. The term $ \sum_{j=1}^n h_j^i (w^j(k - \tau_{ij}) - w^i(k)) $ in \eqref{eq:update_rule} computes a weighted sum of differences between node $ i $’s current state $ w^i(k) $ and the delayed states of its neighbors $ w^j(k - \tau_{ij}) $, where $ h_j^i $ represents the connection weight from node $ j $ to node $ i $. This sum acts as a corrective signal: if $ w^i(k) $ is less than its neighbors’ delayed states, the sum is positive, increasing $ w^i(k+1) $ ; if greater, the sum is negative, decreasing $ w^i(k+1) $. The step size $ c $ directly scales this corrective signal, determining the magnitude of the adjustment applied to $ w^i(k) $. A larger $ c $, makes the update bigger, pushing the node’s state more aggressively toward its neighbors, while a smaller $ c $ makes the update smaller, adjusting the node’s state more gradually. However, because the updates rely on delayed information, a large $ c $ might cause nodes to overshoot their targets. For example, if two nodes are adjusting based on each other’s past states, they could overcorrect and oscillate instead of settling down. In extreme cases, this could prevent convergence altogether.
Thus, $ c $ controls how aggressively node $ i $ responds to discrepancies with its neighbors, balancing the trade-off between rapid alignment and the risk of instability introduced by relying on delayed information.

A larger bound on $ c $ %, such as that permitted by the Rouch\'e theorem, 
accelerates convergence% by reducing the spectral radius of the iteration matrix within the stable range
, thereby decreasing the number of steps required to reach consensus. Conversely, a smaller $ c $ limits the adjustment per step, which comes at the cost of slower convergence, %as the spectral radius remains closer to 1, 
necessitating more iterations. The seminal work by \cite{olfati2007consensus} introduced a bound on $ c $ using the Gershgorin Circle Theorem, yielding $ c < \frac{1}{\Delta_{\max}} $, which is more conservative than the Rouch\'e bound. Adopting the Rouch\'e bound thus allows for larger step sizes, optimizing convergence speed while still guaranteeing stability, offering a practical advantage over the one found through the Gershgorin Circle Theorem, especially in systems where rapid consensus is prioritized.

\begin{theorem} \label{thm:delay_consensus_d1}
    Consider the MAS in Lemma~\ref{lem:convergence_uniform_delay}, where the system converges to a consensus value $ \alpha_j $. The consensus value is given by
    \begin{equation} \label{eqConcensusValueUniform}
        \alpha_j = \frac{\sum_{i=1}^n (1 + c d \deg(i)) w_j^i(0)}{n + 2 c d |E|},
    \end{equation}
    where $|E|$ is the number of edges in $G$. This is a weighted average of the initial conditions with weights $\frac{1 + c d \deg(i)}{n + 2 c d |E|}$, that sum to 1. Then, $\alpha_j \neq \frac{1}{n} \sum_{i=1}^n w_j^i(0)$ (the uniform average) unless $G$ is regular (all degrees equal), and the consensus value satisfies $\min_i w_j^i(0) \leq \alpha_j \leq \max_i w_j^i(0)$.
\end{theorem}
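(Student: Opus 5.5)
The plan is to combine the conservation law of Lemma~\ref{lem:conservation_nonuniform} with the convergence guaranteed by Lemma~\ref{lem:convergence_uniform_delay}. The conserved quantity is evaluated twice: once at the initial history, and once in the limit $k\to\infty$, where every agent holds $\alpha_j$.

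\textbf{Step 1: specialize the conservation law to the uniform-delay case.} The dynamics $W(k+1)=(I-c\Delta)W(k)+cHW(k-d)$ act on $W$ by left multiplication. Hence each column $w_j(k)\in\mathbb{R}^n$ obeys the same recursion. The stacked vector
\[
z_j(k)=[w_j(k)^T,\ldots,w_j(k-d)^T]^T
\]
therefore satisfies $z_j(k+1)=Az_j(k)$, with $A$ as in Lemma~\ref{lem:conservation_nonuniform} for $\tau_{\max}=d$. The lemma is phrased in terms of $z^i$, but its proof only uses $V^TA=V^T$, so it applies verbatim to the columns.

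With uniform delay $d$ we have $H_d=H$ and $H_l=0$ for $l<d$. Consequently $\sum_{l=m}^{d}H_l=H$ for every $m=1,\ldots,d$. Taking $\beta=1$, the conserved quantity is
\[
Q_j(k)=\mathbf{1}^Tw_j(k)+c\sum_{m=1}^{d}\mathbf{1}^THw_j(k-m).
\]
Using $\mathbf{1}^TH=[\deg(1),\ldots,\deg(n)]$, this becomes
\[
Q_j(k)=\sum_i w_j^i(k)+c\sum_{m=1}^{d}\sum_i\deg(i)\,w_j^i(k-m).
\]

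\textbf{Step 2: evaluate at the initial history.} Take $k=0$, where $w^i(k)=w^i(0)$ for $k=-d,\ldots,0$; the history is constant, and I will check that the indexing of the history window matches the statement of Lemma~\ref{lem:convergence_uniform_delay}. Then
\[
Q_j(0)=\sum_i(1+cd\deg(i))\,w_j^i(0).
\]

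\textbf{Step 3: evaluate in the limit.} By Lemma~\ref{lem:convergence_uniform_delay}, $w_j(k)\to\alpha_j\mathbf{1}$. Since $d$ is fixed, all the delayed terms $w_j(k-m)$ converge to $\alpha_j\mathbf{1}$ as well. Because $Q_j$ is continuous and constant in $k$,
\[
Q_j(0)=\alpha_j\bigl(n+cd\,\mathbf{1}^TH\mathbf{1}\bigr)=\alpha_j\,(n+2cd|E|),
\]
using the handshake identity $\mathbf{1}^TH\mathbf{1}=\sum_i\deg(i)=2|E|$. Dividing by $n+2cd|E|>0$ gives \eqref{eqConcensusValueUniform}.

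\textbf{Step 4: remaining claims.} The weights $\frac{1+cd\deg(i)}{n+2cd|E|}$ are positive because $c>0$, and they sum to $1$ by the handshake identity again. So $\alpha_j$ is a convex combination of the $w_j^i(0)$, which yields $\min_i w_j^i(0)\le\alpha_j\le\max_i w_j^i(0)$.

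The weights are all equal to $1/n$ exactly when $\deg(i)$ is constant, i.e., when $G$ is regular. In that case $\alpha_j$ is the uniform average. For a nonregular graph the weight vector differs from $\frac1n\mathbf{1}$, so $\alpha_j$ differs from the uniform average for generic initial conditions. I will state this qualification explicitly, since $\alpha_j$ can coincide with the average for special initial data, for instance when $w_j(0)$ is a constant vector.

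\textbf{Main obstacle.} I expect the main difficulty to be bookkeeping rather than analysis. First, the conservation law must be transferred correctly from the row-indexed $z^i$ of Lemma~\ref{lem:conservation_nonuniform} to the column vectors $w_j$. Second, the conserved quantity must be evaluated on the exact initial window $k=-d,\ldots,0$ on which the state is constant. If the recursion instead starts at $k\ge\tau_{\max}$, as in \eqref{eq:update_rule}, I will evaluate $Q_j$ at that first time, where the window is still constant, and the computation is unchanged.
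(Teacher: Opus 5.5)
Your proposal is correct and follows the paper's proof essentially step for step. You specialize the left eigenvector $V$ of Lemma~\ref{lem:conservation_nonuniform} to $H_d=H$, equate the conserved quantity on the constant initial window with its consensus limit via $\mathbf{1}^T H\mathbf{1}=2|E|$, and read off positivity, unit sum, and the convex-combination bounds. Your two refinements are sound and sharper than the paper's write-up: working with the columns $w_j$ (the paper's $z^i$ stacks agent rows, which do not themselves satisfy $z^i(k+1)=Az^i(k)$), and limiting the ``differs from the uniform average'' claim to generic initial data, since it fails whenever $\sum_i \deg(i)\,w_j^i(0)=\frac{2|E|}{n}\sum_i w_j^i(0)$, e.g.\ for constant $w_j(0)$.
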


\begin{proof}
    % Modeling the system
    The proof is performed by modeling the delayed system in an augmented state space, using the dynamics from Lemma~\ref{lem:vector_update_nonuniform_delay}. The system’s update rule, with uniform delay $\tau_{ij} = d$ for all $(i,j) \in \mathcal{E}$, is
    \begin{equation*}
        W(k+1) = (I - c \Delta) W(k) + c H W(k - d).
    \end{equation*}
    %, with $W(k)_{ij} = w_j^i(k)$ for $k = -d, \ldots, -1$.

    % Steady-state condition
    At convergence to consensus, $w_j^i(k) \to \alpha_j$ for all $i$ and each $j$, implying $w^i(k) \to \alpha^T = [\alpha_1, \ldots, \alpha_n]$, and thus $W(k) \to W^* = \mathbf{1} \alpha^T$, where $ W^* = [\alpha^T; \alpha^T; \ldots; \alpha^T] $ has all rows equal to $ \alpha^T $. At steady state, $W(k - d) \to W^*$, so 
    \begin{equation} \label{eq:w_star}
        W^* = (I - c \Delta) W^* + c H W^*.
    \end{equation} 
    Substituting $W^* = \mathbf{1} \alpha^T$ into \eqref{eq:w_star} and rearranging terms, we get
    \begin{equation*}
        (I - c \Delta) W^* = \mathbf{1} \alpha^T - c \Delta \mathbf{1} \alpha^T,
    \end{equation*}
    \begin{equation*}
        \quad H W^* = H \mathbf{1} \alpha^T,
    \end{equation*}
    where $\Delta \mathbf{1} = [\deg(1), \deg(2), \ldots, \deg(n)]^T$ and $H \mathbf{1} = \Delta \mathbf{1}$ (since $[H \mathbf{1}]^i = \sum_{j=1}^n h_j^i = \deg(i)$ for an undirected graph). Thus
    \begin{equation*}
        W^* = \mathbf{1} \alpha^T - c \Delta \mathbf{1} \alpha^T + c H \mathbf{1} \alpha^T = \mathbf{1} \alpha^T,
    \end{equation*}
    confirming the steady-state condition.

    % Augmented state space
    To find $\alpha_j$, define the augmented state $z^i(k) = [w^i(k), w^i(k-1), \ldots, w^i(k-d)]^T \in \mathbb{R}^{(d+1)n}$ for agent $i$, that stacks the current and previous priority vectors as columns, with dynamics
    \begin{equation*}
        z^i(k+1) = \begin{bmatrix} (I - c \Delta) w^i(k) + c H w^i(k - d) \\ w^i(k) \\ \vdots \\ w^i(k-d+1) \end{bmatrix} = A z^i(k),
    \end{equation*}
    \begin{equation} \label{eq:A_matrix_uniformDelay}
        A = \begin{bmatrix} I - c \Delta & 0 & \cdots & 0 & c H \\ I & 0 & \cdots & 0 & 0 \\ 0 & I & \ddots & \vdots & \vdots \\ \vdots & \vdots & \ddots & 0 & 0 \\ 0 & 0 & \cdots & I & 0 \end{bmatrix},
    \end{equation}
    where $A$ is $(d+1)n \times (d+1)n$. At consensus, $z^i(k) \to [\alpha, \alpha, \ldots, \alpha]^T$ (length $d+1$), and $A [\alpha, \alpha, \ldots, \alpha]^T = [\alpha, \alpha, \ldots, \alpha]^T$, since $(I - c \Delta) \alpha + c H \alpha = \alpha$.

    % Using the left eigenvector from the lemma
    From Lemma~\ref{lem:conservation_nonuniform}, the left eigenvector for the non-uniform delay case is expressed as in \eqref{eq:Eigenvector}, satisfying $V^T A = V^T$. For uniform delay $\tau_{ij} = d$, all edges have delay $d$, so $\tau_{\max} = d$, and $H_l = 0$ for $l \neq d$, $H_d = H$. Thus, the sums simplify to $\mathbf{1}^T \sum_{l=k}^{\tau_{\max}} H_l = \mathbf{1}^T H = [\deg(1), \ldots, \deg(n)]$ for $k \leq d$, and 0 otherwise. The eigenvector becomes
    \begin{align*}
        V^T = [ \beta \mathbf{1}^T, c \beta [\deg(1), \ldots, \deg(n)], \ldots,& \\
        c \beta [\deg(1), \ldots, \deg(n)]]&,
    \end{align*}
    with $d$ copies of $c \beta [\deg(1), \ldots, \deg(n)]$. The lemma establishes that $V^T z^i(k) = V^T z^i(0)$ for all $k \geq 0$. This conserved quantity is computed to find $\alpha_j$.

    % Conserved quantity
    Compute $V^T z^i(k)$ as follows:
    \begin{equation*}
        V^T z^i(k) = \beta \mathbf{1}^T w^i(k) + c \beta \sum_{m=1}^d [\deg(1), \ldots, \deg(n)] w^i(k-m).
    \end{equation*}
    At consensus, $w^i(k-m) \to \alpha$, and
    \begin{align} \label{eq:consensus_value1}
        V^T [\alpha, \alpha, \ldots, \alpha]^T &= \beta \mathbf{1}^T \alpha + c \beta \sum_{m=1}^d [\deg(1), \ldots, \deg(n)] \alpha \notag \\
        &= \beta n \alpha_j + c \beta d \alpha_j \sum_{i=1}^n \deg(i) \notag \\
        &= \beta \alpha_j (n + 2 c d |E|),
    \end{align}
    using the handshaking lemma $\sum_{i=1}^n \deg(i) = 2 |E|$. Initially, $w^i(-m) = w^i(0)$ for $m=1,\ldots,d$ and we have
    \begin{align} \label{eq:consensus_value2}
        V^T z^i(0) &= \beta \mathbf{1}^T w^i(0) + c \beta \sum_{m=1}^d [\deg(1), \ldots, \deg(n)] w^i(0) \notag \\
        &= \beta \sum_{j=1}^n w_j^i(0) + c \beta d \sum_{i=1}^n \deg(i) w_j^i(0).
    \end{align}
    From \eqref{eq:consensus_value1} and \eqref{eq:consensus_value2}, we have:
    \begin{equation*}
        \beta \alpha_j (n + 2 c d |E|) = \beta \sum_{i=1}^n (1 + c d \deg(i)) w_j^i(0),
    \end{equation*}
    \begin{equation*}
        \alpha_j = \frac{\sum_{i=1}^n (1 + c d \deg(i)) w_j^i(0)}{n + 2 c d |E|},
    \end{equation*}
    thus proving \eqref{eqConcensusValueUniform}.

    % Weight sum
    The weights are $\frac{1 + c d \deg(i)}{n + 2 c d |E|}$, which are positive since $ c > 0 $, $ \deg(i) \geq 1 $ (as $ G $ is connected), and $ n + c \cdot 2 |E| > 0 $ (with $ |E| \geq n - 1 $). Their sum is:
    \begin{equation*}
        \frac{\sum_{i=1}^n (1 + c d \deg(i))}{n + 2 c d |E|} = \frac{n + c d \cdot 2 |E|}{n + 2 c d |E|} = 1.
    \end{equation*}

    % Non-uniformity
    For $\alpha_j \neq \frac{1}{n} \sum_{i=1}^n w_j^i(0)$, compute
    \begin{align*}
        &\alpha_j - \frac{1}{n} \sum_{i=1}^n w_j^i(0) = \frac{\sum_{i=1}^n (1 + c d \deg(i)) w_j^i(0)}{n + 2 c d |E|} - \frac{\sum_{i=1}^n w_j^i(0)}{n} \\
        &= \frac{1}{n(n + 2 c d |E|)} \Bigg( n \sum_{i=1}^n (1 + c d \deg(i)) w_j^i(0) \\
        &- (n + 2 c d |E|) \sum_{i=1}^n w_j^i(0) \Bigg) \\
        &= \frac{c d}{n + 2 c d |E|} \left( \sum_{i=1}^n \deg(i) w_j^i(0) - \frac{2 |E|}{n} \sum_{i=1}^n w_j^i(0) \right),
    \end{align*}
    where $ \frac{2 |E|}{n} = \bar{d} $ (mean degree). This is zero if
    \begin{equation*}
        \sum_{i=1}^n \deg(i) w_j^i(0) = \bar{d} \sum_{i=1}^n w_j^i(0),
    \end{equation*}
    which holds when $ \deg(i) = \bar{d} $ for all $ i $ (regular graph), otherwise it is non-zero due to degree variation.

    % Bounds
    For the last part of the theorem, since $ \frac{1 + c d \deg(i)}{n + 2 c d |E|} > 0 $ and $ \sum_{i=1}^n \frac{1 + c d \deg(i)}{n + 2 c d |E|} = 1 $, $ \alpha_j $ is a convex combination \footnote{A convex combination is a linear combination of points where all coefficients are non-negative and add up to 1 \cite{rockafellar1997convex}.}. Let $ m_j = \min_i w_j^i(0) $ and $ M_j = \max_i w_j^i(0) $. Then
    \begin{align*}
        \alpha_j &\geq \sum_{i=1}^n \frac{1 + c d \deg(i)}{n + 2 c d |E|} m_j = m_j, \\
        \alpha_j &\leq \sum_{i=1}^n \frac{1 + c d \deg(i)}{n + 2 c d |E|} M_j = M_j,
    \end{align*}
    thus $ m_j \leq \alpha_j \leq M_j $, completing the proof.
\end{proof}

\subsection{Nonuniform delays}
Building on the uniform delay case, the next lemma and theorem extend the analysis to the general scenario of nonuniform delays, where $ \tau_{ij} $ varies across edges, addressing the challenges posed by heterogeneous communication lags in practical networks.

\begin{lemma}[Lyapunov Stability with LMI] \label{lem:Lyapunov_Stability_with_LMI}
    Consider the MAS in Lemma~\ref{lem:vector_update_nonuniform_delay}. The dynamics are
    \begin{equation}  \notag
        w_j(k+1) = (I - c \Delta) w_j(k) + c \sum_{m=1}^{\tau_{\max}} H_m w_j(k - m),
    \end{equation}
    %where $ H = \sum_{m=1}^{\tau_{\max}} H_m $, and 
    where $ L \mathbf{1} = 0 $. 
    The augmented state $ z(k) \in \mathbb{R}^{n (\tau_{\max} + 1)} $ is
    \begin{equation} \notag
        z(k) = [w_j(k)^T, w_j(k-1)^T, \ldots, w_j(k - \tau_{\max})^T]^T ,
    \end{equation}
    with dynamics
    \[
    z(k+1) = A z(k),
    \]
    \begin{equation} \label{eq:A_matrix_nonuniformDelay}
        A = 
        \begin{bmatrix} 
            I - c \Delta & c H_1 & c H_2 & \cdots & c H_{\tau_{\max}} \\ 
            I & 0 & 0 & \cdots & 0 \\ 
            0 & I & 0 & \cdots & 0 \\ 
            \vdots & \vdots & \ddots & \ddots & \vdots \\ 
            0 & 0 & \cdots & I & 0 
        \end{bmatrix}.
    \end{equation}
    The system converges to consensus, i.e., \( w_j^i(k) \to \alpha_j \) for all \( i \) as \( k \to \infty \) and for some \( \alpha_j \in \mathbb{R} \), if there exists a positive definite matrix \( P \in \mathbb{R}^{n (\tau_{\max} + 1) \times n (\tau_{\max} + 1)} \) satisfying the LMI
    \begin{equation*}
        A^T Q P Q A - Q P Q < 0,
    \end{equation*}
    where \( Q \in \mathbb{R}^{n (\tau_{\max} + 1) \times n (\tau_{\max} + 1)} \), defined as 
    \[ 
        Q = I - \frac{1}{n (\tau_{\max} + 1)} \mathbf{1}_{n (\tau_{\max} + 1)} \mathbf{1}_{n (\tau_{\max} + 1)}^T, 
    \]
    projects onto the subspace orthogonal to the consensus direction. A candidate \( P \) is given by 
    \[
        P = \sum_{d=0}^{\tau_{\max}} C_d (L + \delta I_n) C_d^T 
    \]
    with \( \delta > 0 \) small to ensure \( P \) is positive definite and $ C_d = [0, \ldots, I, \ldots, 0] $ with $ I $ in the $ (d+1) $-th block and zeros elsewhere. 
    
    This LMI provides a sufficient condition for consensus convergence and serves as a numerical feasibility test for a given step size \( c \). Although it does not yield an explicit analytical bound on \( c \), it can be used to compute the largest admissible \( c \) numerically.
\end{lemma}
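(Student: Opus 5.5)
The plan is a standard Lyapunov argument on the disagreement component of the augmented state $z(k)$. The conserved quantity of Lemma~\ref{lem:conservation_nonuniform} then pins down the limit. Write $N = n(\tau_{\max}+1)$ and $\mathbf{1}_N$ for the all-ones vector in $\mathbb{R}^N$.

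\textbf{Step 1: the consensus direction is fixed by $A$, and $QA = QAQ$.} First note that $A\mathbf{1}_N = \mathbf{1}_N$. Indeed, the first block row gives $(I-c\Delta)\mathbf{1} + c\sum_m H_m\mathbf{1} = \mathbf{1} - c\Delta\mathbf{1} + cH\mathbf{1} = \mathbf{1}$, since $H\mathbf{1} = \Delta\mathbf{1}$ (equivalently $L\mathbf{1}=0$). The shift rows preserve $\mathbf{1}$ trivially. Decompose $z(k) = \gamma(k)\mathbf{1}_N + e(k)$ with $e(k) = Qz(k)$ and $\gamma(k) = \mathbf{1}_N^T z(k)/N$. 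Because $A\mathbf{1}_N\mathbf{1}_N^T$ lies in the range of $\mathbf{1}_N$ and $Q\mathbf{1}_N = 0$, we get $QAQ = QA - \tfrac{1}{N}QA\mathbf{1}_N\mathbf{1}_N^T = QA$. Hence the disagreement obeys the autonomous recursion
\[
e(k+1) = QAz(k) = QAe(k).
\]

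\textbf{Step 2: Lyapunov decrease on the disagreement subspace.} Take $\mathcal{V}(e) = e^T QPQ e = e^T P e$ for $e \in \operatorname{range}(Q)$; it is positive definite there since $P>0$. The strict LMI cannot hold on all of $\mathbb{R}^N$: both $A^TQPQA$ and $QPQ$ annihilate $\mathbf{1}_N$, because $QA\mathbf{1}_N = Q\mathbf{1}_N = 0$. So I would interpret (and state explicitly) the condition $A^TQPQA - QPQ < 0$ as negative definiteness on $\mathbf{1}_N^{\perp} = \operatorname{range}(Q)$.

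For $e = Qe \neq 0$ this gives $\mathcal{V}(e(k+1)) = e^T A^TQPQA e < e^T QPQ e = \mathcal{V}(e(k))$. By compactness of the unit sphere in $\operatorname{range}(Q)$ and homogeneity, there is $\rho<1$ with $\mathcal{V}(e(k+1)) \le \rho\, \mathcal{V}(e(k))$. Since $\lambda_{\min}(P)\|e\|^2 \le \mathcal{V}(e)$, it follows that $e(k)\to 0$ geometrically. In particular all components of $z(k)$, and hence all $w_j^i(k)$, approach a common value $\gamma(k)$.

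\textbf{Step 3: $\gamma(k)$ actually converges.} Disagreement vanishing alone does not show that $\gamma(k)$ settles. For this I would invoke Lemma~\ref{lem:conservation_nonuniform}: $V^T z(k) = V^T z(0)$. Substituting the decomposition gives
\[
V^Tz(0) = \gamma(k)\, V^T\mathbf{1}_N + V^T e(k),
\]
where $V^T\mathbf{1}_N = \beta\big(n + c\sum_{m=1}^{\tau_{\max}}\mathbf{1}^T\sum_{l\ge m}H_l\mathbf{1}\big)$. This is nonzero because $c>0$, $\beta\ne 0$, and $H_l$ has nonnegative entries. Hence
\[
\gamma(k) \to \alpha_j := \frac{V^Tz(0)}{V^T\mathbf{1}_N},
\]
so $w_j^i(k)\to\alpha_j$ for every $i$.

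\textbf{Remaining claims and main obstacle.} For the candidate $P = \sum_d C_d(L+\delta I_n)C_d^T$, positive definiteness is immediate: it is block diagonal with blocks $L+\delta I \succ 0$, because $L \succeq 0$ and $\delta>0$. Whether the LMI holds for this $P$ at a given $c$ is exactly the numerical feasibility test mentioned in the statement, so it needs no analytic proof. The main obstacle is the degenerate direction $\mathbf{1}_N$ in Step 2. It forces the restricted reading of the LMI and the identity $QA = QAQ$. Without that identity the projected error would not evolve autonomously, and the decrease of $\mathcal{V}$ would not follow from the inequality alone.
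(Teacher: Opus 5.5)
Your proposal is correct and uses the same mechanism as the paper's Appendix~II: the quadratic form $z^TQPQz$ serves as a Lyapunov function, and the strict LMI is necessarily read on the disagreement subspace $\mathbf{1}_N^{\perp}$, as the paper also does. Your version is more complete than the paper's, since it supplies steps the paper omits --- $A\mathbf{1}_N=\mathbf{1}_N$ and hence $QA=QAQ$, a uniform contraction factor $\rho<1$ via compactness, and convergence of the common value $\gamma(k)$ via the conserved quantity of Lemma~\ref{lem:conservation_nonuniform}, without which the Lyapunov argument shows only that the disagreement vanishes.
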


\begin{proof}
    See Appendix II.
\end{proof}

\begin{theorem} \label{thm:delay_consensus_nonuniform}
    Consider the MAS in Lemma~\ref{lem:Lyapunov_Stability_with_LMI}, where the system converges to a consensus value $ \alpha_j $. The consensus value is given by
    \begin{equation} \label{eqConcensusValueNonuniform}
        \alpha_j = \frac{\sum_{i=1}^n \left(1 + c \sum_{j \in \mathcal{N}_i} \tau_{ij}\right) w^i(0)}{n + 2 c \sum_{(i,j) \in E} \tau_{ij}},
    \end{equation}
    where $ \sum_{j \in \mathcal{N}_i} \tau_{ij} $ is the total delay incoming to agent $ i $, and $ \sum_{(i,j) \in E} \tau_{ij} $ is the total delay over all edges (counting each edge once due to undirectedness). This is a weighted average of initial conditions with weights $ \frac{1 + c \sum_{j \in \mathcal{N}_i} \tau_{ij}}{n + c \sum_{(i,j) \in E} \tau_{ij}} $, which sum to 1. Then, $ \alpha_j \neq \frac{1}{n} \sum_{j=1}^n w_j^i(0) $ (the uniform average) unless $ \sum_{j \in \mathcal{N}_i} \tau_{ij} $ is equal for all $ i $, and the consensus value satisfies $ \min_i w_j^i(0) \leq \alpha_j \leq \max_i w_j^i(0) $.
\end{theorem}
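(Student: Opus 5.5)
Following the proof of Theorem~\ref{thm:delay_consensus_d1}, I plan to combine the conserved quantity of Lemma~\ref{lem:conservation_nonuniform} with the convergence guaranteed by Lemma~\ref{lem:Lyapunov_Stability_with_LMI}. Fix a component $j$. Work with the column $w_j(k)$ and the augmented state $z(k)=[w_j(k)^T,\ldots,w_j(k-\tau_{\max})^T]^T$; the left eigenvector $V$ of \eqref{eq:Eigenvector} satisfies $V^TA=V^T$ for exactly this block matrix $A$. Therefore $V^Tz(k)=V^Tz(0)$ for all $k$. Since the column $w_j(k)$ evolves under the same $A$, applying $V^T$ to the column state is legitimate. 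Take $\beta=1$.

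\textbf{Evaluate both sides.} First, the initial value. With $w_j(-m)=w_j(0)$ for $m=1,\ldots,\tau_{\max}$, we get
\[
V^Tz(0)=\mathbf{1}^Tw_j(0)+c\sum_{m=1}^{\tau_{\max}}\mathbf{1}^T\Big(\sum_{l=m}^{\tau_{\max}}H_l\Big)w_j(0).
\]
The key identity is
\[
\sum_{m=1}^{\tau_{\max}}\sum_{l=m}^{\tau_{\max}}H_l=\sum_{l=1}^{\tau_{\max}}lH_l,
\]
whose $(i,k)$ entry is $h^i_k\tau_{ik}$. Taking column sums and using the symmetry $\tau_{ik}=\tau_{ki}$, the $i$-th entry of $\mathbf{1}^T\sum_l lH_l$ equals $\sum_{k\in\mathcal N_i}\tau_{ik}$, the total delay incident to agent $i$. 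Hence
\[
V^Tz(0)=\sum_i\Big(1+c\sum_{k\in\mathcal N_i}\tau_{ik}\Big)w_j^i(0).
\]
Second, the limit. By Lemma~\ref{lem:Lyapunov_Stability_with_LMI}, $z(k)\to\alpha_j\mathbf{1}_{n(\tau_{\max}+1)}$. Continuity of the linear functional then gives
\[
\lim_k V^Tz(k)=\alpha_j\Big(n+c\sum_i\sum_{k\in\mathcal N_i}\tau_{ik}\Big)=\alpha_j\Big(n+2c\sum_{(i,k)\in E}\tau_{ik}\Big),
\]
where the last step is a weighted handshaking lemma: each undirected edge is counted twice. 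Equating the two evaluations yields \eqref{eqConcensusValueNonuniform}.

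\textbf{Remaining properties.} The weights are positive because $c>0$ and $\tau\ge0$, and they sum to $1$ by the same handshaking identity. The normalizer in the weight formula should read $n+2c\sum_E\tau_{ij}$, consistent with \eqref{eqConcensusValueNonuniform}; I will note this. Because the weights are a convex combination, the bounds $\min_i w_j^i(0)\le\alpha_j\le\max_i w_j^i(0)$ follow exactly as before. For the non-uniformity claim, write $s_i=\sum_{k\in\mathcal N_i}\tau_{ik}$ and $\bar s=\frac1n\sum_i s_i$. Mirroring the uniform-delay computation gives
\[
\alpha_j-\tfrac1n\sum_i w_j^i(0)=\frac{c}{n+2c\sum_E\tau}\sum_i(s_i-\bar s)\,w_j^i(0).
\]
This vanishes for all initial conditions iff $s_i\equiv\bar s$. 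I will state the claim in this "for generic initial conditions" sense, since specific initial data orthogonal to $s-\bar s\mathbf{1}$ can still give the plain average.

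\textbf{Main obstacle.} The substantive step is the index bookkeeping converting the tail sums $\sum_{l\ge m}H_l$ into the delay-weighted adjacency $\sum_l lH_l$. The column-sum-equals-row-sum argument relies on symmetry of $\tau$ and $H$, and this must be checked carefully. A second point to check is that the edges with $\tau_{ij}=0$ are absent from the $H_m$ sum. They contribute nothing to either side: they cancel in the first block of $V^TA$ only if $\Delta$ counts them, so I will assume, as Lemma~\ref{lem:vector_update_nonuniform_delay} does, that all link delays lie in $\{1,\ldots,\tau_{\max}\}$.
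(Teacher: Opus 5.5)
Your proposal is correct and follows the same route as the paper. It applies the conserved quantity $V^Tz(k)$ of Lemma~\ref{lem:conservation_nonuniform}, rewrites the tail sums as $\sum_{l} lH_l$ to obtain the incident delay $\psi(i)$ and the weighted handshaking total $2\sum_{E}\tau_{ij}$, and equates the constant-history initial value with the consensus limit; your column-wise augmented state (the object $A$ actually acts on), the corrected normalizer $n+2c\sum_E\tau_{ij}$, and your explicit treatment of the weight sum, the convex-combination bounds, and the ``generic initial conditions'' reading of the non-uniformity claim are refinements that the paper's proof omits rather than departures from it.
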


\begin{proof}
    At steady state, the system converges to consensus, where $ W(k) \to W^* = \mathbf{1} \alpha^T $, with $ \alpha^T = [\alpha_1, \ldots, \alpha_n] $ representing the consensus values across all agents and components. Substituting this into the update rule $ W(k+1) = (I - c \Delta) W(k) + c \sum_{m=1}^{\tau_{\max}} H_m W(k - m) $, and noting that at steady state all delayed states also equal $ W^* $ (since $ W(k - m) \to W^* $ for all $ m $), one can obtain
    \begin{equation*}
        W^* = (I - c \Delta) W^* + c \sum_{m=1}^{\tau_{\max}} H_m W^*.
    \end{equation*}
    %and $ \Delta \mathbf{1} = H \mathbf{1} $.
    
    To determine $ \alpha $, define the augmented state vector $ z^i(k) = [w^i(k)^T, w^i(k-1)^T, \ldots, w^i(k - \tau_{\max})^T]^T \in \mathbb{R}^{n (\tau_{\max} + 1)} $, which stacks the current and all delayed priority vectors for agent $ i $. The dynamics are governed by
    \begin{equation} \label{eq:augmented}
        z^i(k+1) = A z^i(k),
    \end{equation}
    where the state transition matrix $ A $ is as in \eqref{eq:A_matrix_nonuniformDelay}.
    
    From Lemma~\ref{lem:conservation_nonuniform}, the vector $ V $ in \eqref{eq:Eigenvector} is a left eigenvector of $A$ with eigenvalue 1, satisfying $V^T A = V^T$. The lemma also establishes that the quantity $V^T z^i(k)$ is conserved, i.e., $V^T z^i(k) = V^T z^i(0)$ for all $k \geq 0$. This conserved quantity is computed to find the consensus value $\alpha_j$.
 
    The expression for $V^T z^i(k)$ from the lemma is
    \begin{equation} \label{eq:conserved_quantity}
        V^T z^i(k) = \beta \mathbf{1}^T w^i(k) + \beta c \sum_{m=1}^{\tau_{\max}} \left( \mathbf{1}^T \sum_{l=m}^{\tau_{\max}} H_l \right) w^i(k-m).
    \end{equation}
    At consensus, $w^i(k) \to \alpha_j \mathbf{1}$ for all $i$ and all delayed states $w^i(k - m) \to \alpha_j \mathbf{1}$ (since the system has converged and delays are finite). Substituting this into \eqref{eq:conserved_quantity}, we have
    \[
        V^T z^i(k) \to \beta \mathbf{1}^T (\alpha_j \mathbf{1}) + \beta c \sum_{m=1}^{\tau_{\max}} \left( \mathbf{1}^T \sum_{l=m}^{\tau_{\max}} H_l \right) (\alpha_j \mathbf{1}).
    \]
    Since $\mathbf{1}^T \mathbf{1} = n$, the first term is $ \beta \mathbf{1}^T (\alpha_j \mathbf{1}) = \beta n \alpha_j $. 
    The sum in the above can be rewritten as
    \[
        \sum_{m=1}^{\tau_{\max}} \left( \mathbf{1}^T \sum_{l=m}^{\tau_{\max}} H_l \right) \mathbf{1} = \sum_{m=1}^{\tau_{\max}} \sum_{l=m}^{\tau_{\max}} \mathbf{1}^T H_l \mathbf{1}.
    \]
    Since $\mathbf{1}^T H_l \mathbf{1} = \sum_{i=1}^{n} \sum_{j=1}^{n} (H_l)^i_j = 2 \sum_{(i,j) \in E: \tau_{ij} = l} 1$ (each edge with delay $l$ is counted twice in an undirected graph), the summation can be reordered as follows: 
    \footnote{The equality $\sum_{m=1}^{\tau_{\max}} \sum_{l=m}^{\tau_{\max}} \mathbf{1}^T H_l \mathbf{1} = \sum_{l=1}^{\tau_{\max}} l \cdot \mathbf{1}^T H_l \mathbf{1}$ holds as follows: 
    \begin{itemize}
        \item The left-hand side sums $\mathbf{1}^T H_l \mathbf{1}$ over $l$ from $m$ to $\tau_{\max}$ for each $m$ from 1 to $\tau_{\max}$, counting $\mathbf{1}^T H_1 \mathbf{1}$ once, $\mathbf{1}^T H_2 \mathbf{1}$ twice, ..., $\mathbf{1}^T H_{\tau_{\max}} \mathbf{1}$ $\tau_{\max}$ times.
        \item The right-hand side sums $l \cdot \mathbf{1}^T H_l \mathbf{1}$ over $l$ from 1 to $\tau_{\max}$, where $l$ weighs the edge count by the delay, matching the left-hand side’s counting. 
    \end{itemize}
    Thus, both sides are equal due to the reordering of the summation according to the alignment of the delay contributions.}:
    \begin{align*}
        \sum_{m=1}^{\tau_{\max}} \sum_{l=m}^{\tau_{\max}} \mathbf{1}^T H_l \mathbf{1} &= \sum_{l=1}^{\tau_{\max}} l \cdot \mathbf{1}^T H_l \mathbf{1} = 2 \sum_{l=1}^{\tau_{\max}} l \sum_{(i,j) \in E: \tau_{ij} = l} 1 \\
        &= 2 \sum_{(i,j) \in E} \tau_{ij}.
    \end{align*}
    The equality $ \sum_{l=1}^{\tau_{\max}} l \sum_{(i,j) \in E: \tau_{ij} = l} 1 = \sum_{(i,j) \in E} \tau_{ij} $ holds because the left-hand side computes the total delay across all edges. For each delay $ l $, $ \sum_{(i,j) \in E: \tau_{ij} = l} 1 $ counts edges with $ \tau_{ij} = l $, and multiplying by $ l $ gives their delay contribution. Summing over $ l $ from 1 to $ \tau_{\max} $, each edge $(i, j)$ contributes its delay $ \tau_{ij} $ exactly once, so $ \sum_{l=1}^{\tau_{\max}} l \sum_{(i,j) \in E: \tau_{ij} = l} 1 = \sum_{(i,j) \in E} \tau_{ij} $.

    Thus, $\sum_{m=1}^{\tau_{\max}} \left( \mathbf{1}^T \sum_{l=m}^{\tau_{\max}} H_l \right) \mathbf{1} = 2 \sum_{(i,j) \in E} \tau_{ij}$, and
    \begin{equation} \label{eq:VTzi_at_infinity}
        V^T z^i(k) \to \beta n \alpha_j + 2 \beta c \alpha_j \sum_{(i,j) \in E} \tau_{ij}.
    \end{equation}
    Initially, assuming $w^i(-m) = w^i(0)$ for $m = 1, \ldots, \tau_{\max}$, we have
    \[
        V^T z^i(0) = \beta \mathbf{1}^T w^i(0) + \beta c \sum_{m=1}^{\tau_{\max}} \left( \mathbf{1}^T \sum_{l=m}^{\tau_{\max}} H_l \right) w^i(0).
    \]
    Since $w^i(0)$ is the same for each $m$ under this assumption, we have
    \[
        \sum_{m=1}^{\tau_{\max}} \left( \mathbf{1}^T \sum_{l=m}^{\tau_{\max}} H_l \right) w^i(0) = \left(\sum_{m=1}^{\tau_{\max}} \left( \sum_{l=m}^{\tau_{\max}} \mathbf{1}^T H_l \right) \right) w^i(0).
    \]
    The first term is $ \beta \mathbf{1}^T w^i(0) = \beta \sum_{i=1}^{n} w^i(0) $. Define $\psi(i) = \sum_{j \in \mathcal{N}_i} \tau_{ij}$. Then, 
    \[
        \sum_{m=1}^{\tau_{\max}} \mathbf{1}^T \sum_{l=m}^{\tau_{\max}} H_l = [\psi(1), \ldots, \psi(n)], 
    \]
    and
    \begin{equation} \label{eq:VTzi_at_zero}
        V^T z^i(0) = \beta \sum_{i=1}^n w^i(0) + \beta c \sum_{i=1}^n \psi(i) w^i(0).
    \end{equation}
    Equating \eqref{eq:VTzi_at_infinity} and \eqref{eq:VTzi_at_zero} and solving yields
    \[
        \beta \sum_{i=1}^n w^i(0) + \beta c \sum_{i=1}^n \psi(i) w^i(0) = n \beta \alpha_j + 2 \beta c \alpha_j \sum_{(i,j) \in E} \tau_{ij}.
    \]
    Since $\beta \neq 0$, both sides can be divided by $\beta$, leading to
    \[
        \alpha_j = \frac{\sum_{i=1}^n \left(1 + c \psi(i)\right) w^i(0)}{n + 2 c \sum_{(i,j) \in E} \tau_{ij}}.
    \]
\end{proof}

\section{Simulations and Experiments} \label{sec:results}
\subsection{Simulations}
Before conducting real-world experiments, the proposed consensus algorithm is evaluated through numerical simulations. The kinematic model of the QBot3 robot, along with the relevant motion dynamics, terminology used on the QBot3 platform, and the control structure, are detailed in \cite{naderi2025distributed}.

Each simulated agent follows the same update rule as described in Section~\ref{sec:main_results}, interacting over a predefined communication graph that includes nonuniform constant delays. These simulations allow us to verify the theoretical convergence guarantees in a controlled environment, visualize the dynamic behavior of the agents, and investigate the influence of delay parameters and step size on convergence speed and final values.

For the simulation parameters and initial conditions, the same values as reported in \cite{naderi2025distributed} are adopted. %The controller gains are set to $k_p = 1.25$ and $k_d = 1$. 
The communication delays between agents are modeled by the delay matrix
\begin{equation} \notag
    \tau = 
    \begin{bmatrix} 
    0 & 7 & 1 & 5 \\ 
    7 & 0 & 5 & 5 \\  
    1 & 5 & 0 & 6 \\  
    5 & 5 & 6 & 0 \\ 
    \end{bmatrix}.
\end{equation}

Figure~\ref{fig:convergence_c} illustrates the time evolution of the priority matrix columns for 4 agents over 4 seconds under three different step sizes. Each subplot shows the evolution of one column of $W$, with the curves corresponding to the values held by the four agents. As observed in the simulations and consistent with the analysis, the step size $c$ exhibits a non-monotonic effect on convergence speed. A small step size ($c=0.05$) yields smooth, monotonic convergence but requires a long settling time. An intermediate value ($c=0.25$) provides the fastest convergence with only mild oscillations. A larger step size ($c=0.55$) significantly increases oscillatory behavior due to overcorrection based on delayed neighbor information, resulting in a long overall settling time despite faster initial correction. 

\begin{figure*}[tb]
    \centering
    \includegraphics[width=\textwidth]{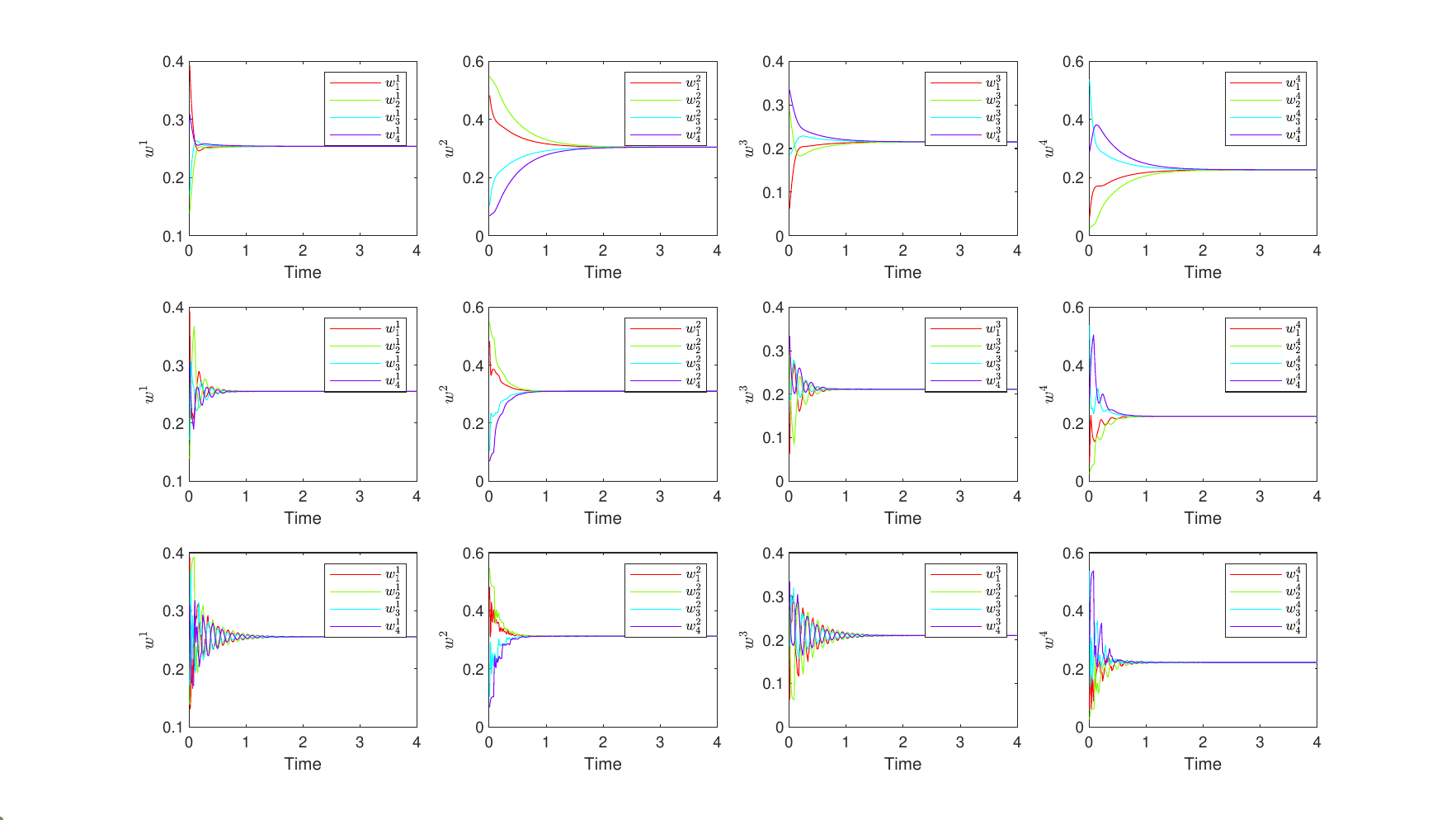}
    \caption{Time evolution of the priority matrix columns for 4 agents over 4 seconds. Each row corresponds to a different step size $c$: $c=0.05$ (first row), $c=0.25$ (second row), and $c=0.55$ (third row).}
    \label{fig:convergence_c}
\end{figure*}

Table~\ref{tab:convergence_summary} reports the convergence times for the different step sizes. Convergence time is defined as the first time instant when the maximum difference within each column falls below $10^{-4}$ and remains below thereafter. The values in the table show that an intermediate step size ($c=0.25$) yields the fastest convergence, while overly small or large values result in longer settling times due to slow adaptation or oscillatory behavior, respectively.

\begin{table}[t]
    \caption{Convergence time and final consensus values for different step sizes $c$ (4 agents, nonuniform delays).}
    \centering
    \begin{tabular}{c c c c c c}
        \hline
        $c$ & Convergence time (s) & $\alpha_1$ & $\alpha_2$ & $\alpha_3$ & $\alpha_4$ \\
        \hline
        0.05 & 3.65 & 0.2537 & 0.3046 & 0.2151 & 0.2266 \\
        0.15 & 1.78 & 0.2545 & 0.3083 & 0.2127 & 0.2245 \\
        0.25 & 1.43 & 0.2548 & 0.3102 & 0.2114 & 0.2235 \\
        0.35 & 1.82 & 0.2551 & 0.3113 & 0.2107 & 0.2229 \\
        0.45 & 2.26 & 0.2552 & 0.3120 & 0.2103 & 0.2225 \\
        0.55 & 2.69 & 0.2553 & 0.3125 & 0.2099 & 0.2223 \\
        \hline
    \end{tabular}
    \label{tab:convergence_summary}
\end{table}

To further validate the observed convergence behavior, a reduced form of the LMI from Lemma~\ref{lem:Lyapunov_Stability_with_LMI} was numerically solved. By projecting the dynamics onto the disagreement subspace (i.e., removing the consensus mode), numerical issues associated with the marginal eigenvalue at 1 were avoided, and a discrete-time Lyapunov LMI was solved on the resulting reduced system. A bisection search over the step size \( c \) was performed to determine the largest value ensuring feasibility of the strict LMI on the reduced system. This yielded a maximum admissible step size of \( c_{\max} = 0.58 \). This bound is consistent with the simulation results and provides a quantitative stability reference for the step-size selection reported in Table~\ref{tab:convergence_summary}. 
Figure~\ref{fig:divergence_c} illustrates the system behavior for a step size slightly above this bound. When $c = 0.59$, the trajectories no longer converge to consensus and instead exhibit growing oscillations, indicating divergence of the disagreement dynamics. This observation agrees with the LMI-based analysis and confirms that step sizes exceeding the admissible limit lead to loss of convergence.
\begin{figure*}[hbt]
    \centering
    \includegraphics[width=\textwidth]{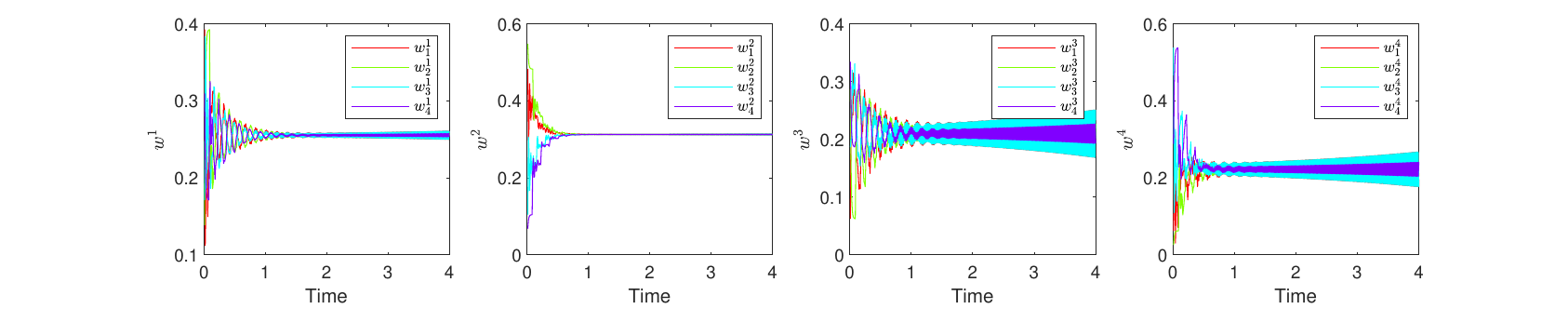}
    \caption{Time evolution of the priority matrix columns for $c=0.59$, slightly above the LMI-certified bound $c_{\max}=0.58$.}
    \label{fig:divergence_c}
\end{figure*}

\subsection{Experiments}
To validate the theoretical convergence results in a real-world setting, the proposed consensus algorithm is implemented on a swarm of four Quanser QBot3 mobile ground robots. Each QBot3 is equipped with onboard computing, differential-drive motors, and wireless communication capabilities, enabling decentralized interaction in physical environments.

The algorithm was deployed using the QUARC Simulink library, which interfaces with MATLAB/Simulink to compile and run control code directly on the robots. Robot positions were tracked using a motion capture system composed of OptiTrack infrared cameras and Motive software, which provided estimates of each robot's position and orientation. These data were streamed into the Simulink environment for logging and evaluation. 

In the experiments, each robot is driven by a simple tracking controller that converts the consensus reference points generated by the algorithm into linear and angular velocity commands. The linear velocity is determined from the distance to the reference using a bounded sigmoid mapping, while the angular velocity is generated by a proportional–derivative controller acting on the heading error. 
The consensus update law and the controller were implemented in MATLAB/Simulink. At each simulation step, the consensus algorithm updated the rendezvous reference points for each agent. The control signals were then generated and applied to the QBot3's inverse kinematic model. 

While no artificial or predefined delays were introduced in the implementation, delays naturally arise in the physical system due to factors such as communication latency, asynchronous processing times, and Wi-Fi transmission variability. These real-time effects result in nonuniform delays across agents, even though the algorithm itself does not explicitly incorporate or measure them. The exact values of these delays are unknown during experiments, preventing us from computing the theoretical convergence value for direct comparison. Nonetheless, experimental observations confirm that the agents’ priority vectors converge consistently, demonstrating the algorithm's robustness under practical communication conditions.

\begin{figure*}[htp]
    \centering
    \includegraphics[width=\textwidth]{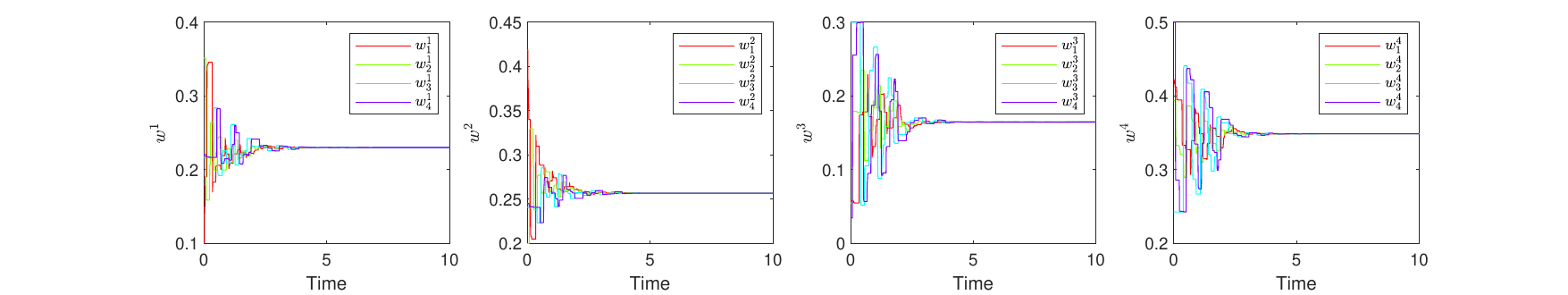}
    \caption{Experimental time evolution of the priority matrix columns for $c=0.20$.}
    \label{fig:W_experiment}
\end{figure*}

Figure~\ref{fig:W_experiment} illustrates the experimental time evolution of the priority matrix columns for $c=0.20$. The agents start from the randomly generated initial priority matrix
\[
W(0)=
\begin{bmatrix}
0.1000 & 0.4189 & 0.0587 & 0.4224 \\
0.3513 & 0.2000 & 0.0548 & 0.3939 \\
0.2165 & 0.2407 & 0.3000 & 0.2427 \\
0.2205 & 0.2451 & 0.0344 & 0.5000
\end{bmatrix}.
\]
Despite these different initial values, all trajectories converge to a common value, demonstrating consensus among the agents. The transient oscillations observed in the first seconds are consistent with the delayed update dynamics predicted by the analysis and gradually vanish as the system approaches the steady state.

\section{Conclusion} \label{sec:conclusion}
This paper analyzed the convergence of a distributed multi-objective optimization algorithm under constant communication delays in MASs. For the uniform delay case, Rouché’s theorem was used to derive exact stability conditions, while a Lyapunov-based approach was employed for the nonuniform delay case. These analyses provide rigorous convergence guarantees and explicit expressions for the consensus value and admissible step-size bounds. Simulation results and experiments on QBot3 robots validated the theoretical findings and demonstrated robustness to communication delays.

Future work will consider extensions to time-varying delays and fully asynchronous update schemes, in which agents operate on different clocks or update at different rates, beyond the communication-level asynchrony considered in this work.

\appendices

\section{Proving the Convergence of Lemma~\ref{lem:convergence_uniform_delay}}
    For each component $ j $, the columns $ w_j(k) = [w^1_j(k), \dots, w^n_j(k)]^T $ have the structure
    \[
        w_j(k+1) = (I - c \Delta) w_j(k) + c H w_j(k - d).
    \]
    Define the augmented state $ z_j(k) = [w_j(k)^T, w_j(k-1)^T, \ldots, w_j(k-d)^T]^T \in \mathbb{R}^{n(d+1)} $, so the dynamics become $ z_j(k+1) = A z_j(k) $, with $A$ having the structure given in \eqref{eq:A_matrix_uniformDelay}. Convergence to consensus requires that the eigenvalue $ \lambda = 1 $, corresponding to the consensus state, be on the unit circle ($ |\lambda| = 1 $) and that all other eigenvalues of $ A $ be of magnitude less than 1 ($|\lambda| < 1 $). The characteristic polynomial is
    \[
        \lambda_A I - A = \begin{bmatrix}
        (\lambda_A - 1) I_n + c \Delta & 0 & \cdots & 0 & -c H \\
        -I & \lambda_A I_n & \cdots & 0 & 0 \\
        0 & -I & \cdots & 0 & 0 \\
        \vdots & \vdots & \ddots & \vdots & \vdots \\
        0 & 0 & \cdots & -I & \lambda_A I_n
        \end{bmatrix}.
    \]
    Using the block determinant formula for a matrix $ \begin{bmatrix} A_{11} & A_{12} \\ A_{21} & A_{22} \end{bmatrix} $, if $ A_{22} $ is invertible, $ \det = \det(A_{22}) \det(A_{11} - A_{12} A_{22}^{-1} A_{21}) $. Here, $ A_{11} = (\lambda_A - 1) I_n + c \Delta $ ($ n \times n $), $ A_{12} = [0, \cdots, 0, -c H] $ ($ n \times nd $), $ A_{21} = [-I, 0, \cdots, 0 ]^T $ ($ nd \times n $), and $ A_{22} = \begin{bmatrix} \lambda_A I_n & 0 & \cdots & 0 \\ -I & \lambda_A I_n & \cdots & 0 \\ \vdots & \ddots & \ddots & \vdots \\ 0 & \cdots & -I & \lambda_A I_n \end{bmatrix} $ ($ nd \times nd $). 

    The matrix $ A_{22} $ is lower-triangular with $ \lambda_A I_n $ on the diagonal ($ d $ times). Thus $\det(A_{22}) = \det(\lambda_A I_n)^d = \lambda_A^{nd} $.
    The inverse $ A_{22}^{-1} $ (with $ n \times n $ blocks) is upper-triangular:
    \[
        A_{22}^{-1} = \begin{bmatrix}
        \lambda_A^{-1} I_n & 0 & \cdots & 0 \\
        \lambda_A^{-2} I_n & \lambda_A^{-1} I_n & \cdots & 0 \\
        \vdots & \ddots & \ddots & \vdots \\
        \lambda_A^{-d} I_n & \lambda_A^{-d+1} I_n & \cdots & \lambda_A^{-1} I_n
        \end{bmatrix},
    \]
    which is obtained by solving $ A_{22} A_{22}^{-1} = I $. Then
    \[
        A_{22}^{-1} A_{21} = [-\lambda_A^{-1} I_n, -\lambda_A^{-2} I_n, \cdots, -\lambda_A^{-d} I_n]^T,
    \]
    \[
        A_{12} A_{22}^{-1} A_{21} = [0, \cdots, 0, -c H] \begin{bmatrix} -\lambda_A^{-1} I_n \\ -\lambda_A^{-2} I_n \\ \vdots \\ -\lambda_A^{-d} I_n \end{bmatrix} = c \lambda_A^{-d} H,
    \]
    \[
        A_{11} - A_{12} A_{22}^{-1} A_{21} = (\lambda_A - 1) I_n + c \Delta - c \lambda_A^{-d} H.
    \]
    The determinant is
    \begin{align} \notag
        \det(\lambda_A I - A) &= \det(A_{22}) \det(A_{11} - A_{12} A_{22}^{-1} A_{21})\\ \notag
        &= \lambda_A^{nd} \det\left( (\lambda_A - 1) I_n + c \Delta - c \lambda_A^{-d} H \right) \\ \notag
        &= \det\left( \lambda_A^{d}(\lambda_A - 1) I_n + c \lambda_A^{d} \Delta - c H \right)\\ \notag
        &= \det\left( \lambda_A^{d+1} I_n - \lambda_A^d (I_n - c \Delta) - c H \right).
    \end{align}
    By substituting $ H = \Delta - L $, define $ D(\lambda_A) := (\lambda_A^{d+1} - \lambda_A^d) I + c (\lambda_A^d \Delta - \Delta + L) $.

    The eigenvalues of $ L $ satisfy $ L v_{i,L} = \lambda_{i,L} v_{i,L} $, where $ 0 = \lambda_{1,L} < \lambda_{2,L} \leq \cdots \leq \lambda_{n,L} \leq 2 \Delta_{\max} $. Moreover, since $ L $ is symmetric, it has an orthogonal eigenvector matrix $ V = [v_{1,L}, \ldots, v_{n,L}] $, with $ V^T V = I $, meaning the eigenvectors are orthonormal: each eigenvector $ v_{i,L} $ is normalized ($ v_{i,L}^T v_{i,L} = 1 $, i.e., $ \|v_{i,L}\|_2 = 1 $) and orthogonal to the others ($ v_{i,L}^T v_{j,L} = 0 $ for $ i \neq j $). Specifically, for the first eigenvector, $ v_{1,L} = \frac{1}{\sqrt{n}} \mathbf{1} $, which is the normalized form of $ \mathbf{1} $ (since $ L \mathbf{1} = 0 $, and $ \|\mathbf{1}\|_2 = \sqrt{n} $, so $ \frac{1}{\sqrt{n}} \mathbf{1} $ has norm 1). For any matrix $ D $, if $ V $ is orthogonal, $ \det(V^T D V) = \det(D) $. Thus, $ \det(D(\lambda_A)) = \det(V^T D(\lambda_A) V) $, and
    \begin{align*}
        V^T D(\lambda_A) V = &(\lambda_A^{d+1} - \lambda_A^d) I \\
        +& c (\lambda_A^d V^T \Delta V - V^T \Delta V + V^T L V),
    \end{align*}
    where $ V^T L V = \text{diag}(\lambda_{1,L}, \ldots, \lambda_{n,L}) $ and it is assumed that $ V^T \Delta V = \text{diag}(\delta_1, \ldots, \delta_n) $, where $ \delta_i = v_{i,L}^T \Delta v_{i,L} $. This assumption states that the matrix $ V^T \Delta V $ has no off-diagonal elements. This holds when the graph is regular, i.e., all nodes have the same degree $ deg_i $, so $ \Delta = deg_i I $ and $ V^T \Delta V = deg_i I $; otherwise, $ V^T \Delta V $ generally has non-zero off-diagonal elements, though these terms are smaller when the degrees are more uniform (e.g., when the variance in node degrees is low), making the approximation $ V^T \Delta V \approx \text{diag}(\delta_1, \ldots, \delta_n) $ more accurate. The stability analysis focuses on the diagonal terms $ \delta_i $, as the characteristic equation is evaluated along each eigenspace of $ L $, leading to a decoupled per-mode analysis. 
    Thus
    \[
    V^T D(\lambda_A) V = \text{diag} \left( \lambda_A^{d+1} - \lambda_A^d + c (\lambda_A^d \delta_i - \delta_i + \lambda_{i,L}) \right)_{i=1}^n,
    \]
    \[
    \det(D(\lambda_A)) = \prod_{i=1}^n \left[ \lambda_A^{d+1} - (1 - c \delta_i) \lambda_A^d - c (\delta_i - \lambda_{i,L}) \right].
    \]
    The characteristic equation per each $ \lambda_{i,A} $ is
    \begin{equation} \label{eq:characteristic_equation_d}
        p_i(\lambda_{i,A}) = \lambda_{i,A}^{d+1} - (1 - c \delta_i) \lambda_{i,A}^d - c (\delta_i - \lambda_{i,L}) = 0.
    \end{equation}
    % Stability analysis for the consensus mode
    For stability, roots must satisfy $ |\lambda_{i,A}| < 1 $ except for the consensus mode. For $ i = 1 $, $ \lambda_{1,L} = 0 $, and it follows that
    \[
    p_1(\lambda_{1,A}) = \lambda_{1,A}^{d+1} - (1 - c \delta_1) \lambda_{1,A}^d - c \delta_1 = 0.
    \]
    %Check $ \lambda_{1,A} = 1 $: $ 1^{d+1} - (1 - c \delta_1) 1^d - c \delta_1 = 1 - (1 - c \delta_1) - c \delta_1 = 0 $, so $ \lambda_{1,A} = 1 $ is a root. The other $ d $ roots must have $ |\lambda_{1,A}| < 1 $. For small $ c $, approximate roots are near 0; their magnitude scales as $ (c \delta_1)^{1/d} $, so $ c < \frac{1}{\delta_1} $ ensures $ |\lambda_{1,A}| < 1 $ for these roots.
    Factoring $ p_1(\lambda_{1,A}) $ using synthetic division yields
    \[
        p_1(\lambda_{1,A}) = (\lambda_{1,A} - 1) (\lambda_{1,A}^d + c \delta_1 \lambda_{1,A}^{d-1} + \cdots + c \delta_1 \lambda_{1,A} + c \delta_1).
    \]
    Letting $ q(\lambda_{1,A}) = \lambda_{1,A}^d + c \delta_1 (\lambda_{1,A}^{d-1} + \cdots + \lambda_{1,A} + 1) $, and assuming a root $ |\lambda_{1,A}| \geq 1 $, we have
    \[
        \lambda_{1,A}^d = -c \delta_1 (\lambda_{1,A}^{d-1} + \cdots + \lambda_{1,A} + 1),
    \]
    %Take magnitudes:
    \[
        |\lambda_{1,A}|^d \leq c \delta_1 \sum_{k=0}^{d-1} |\lambda_{1,A}|^k \leq c \delta_1 d |\lambda_{1,A}|^{d-1} \leq c \delta_1 d.
    \]
    If $ c < \frac{1}{d \delta_1} $, then $ |\lambda_{1,A}| < 1 $, contradicting $ |\lambda_{1,A}| \geq 1 $. Thus, all roots of $ q(\lambda_{1,A}) $ satisfy $ |\lambda_{1,A}| < 1 $, so $ p_1(\lambda_{1,A}) $ has one root at $ \lambda_{1,A} = 1 $ and $ d $ roots inside $ |\lambda_{1,A}| < 1 $.

    Since $ L v_{i,L} = \lambda_i v_{i,L} $, 
    \begin{equation} \label{Bound_H}
        v_{i,L}^T (\Delta - H) v_{i,L} = \lambda_{i,L}
    \implies
        \delta_i - h_i = \lambda_{i,L}.
    \end{equation}
    Thus, For $ i > 1 $, consider $ p_i(\lambda_{i,A}) = \lambda_{i,A}^{d+1} - (1 - c \delta_i) \lambda_{i,A}^d - c h_i $, where $ h_i = \delta_i - \lambda_{i,L} = v_{i,L}^T H v_{i,L} > 0 $.  
    Rouché’s theorem can be applied on the unit circle $ |\lambda_A| = 1 $ to find the bound on $ c $ that ensures stability. Rouché’s theorem states that if $ h(\lambda_{i,A}) $ and $ g(\lambda_{i,A}) $ are analytic inside and on $ |\lambda_{i,A}| = 1 $, and $ |g(\lambda_{i,A})| < |h(\lambda_{i,A})| $ on $ |\lambda_{i,A}| = 1 $, then $ p_i(\lambda_{i,A}) = h(\lambda_{i,A}) + g(\lambda_{i,A}) $ has the same number of roots inside $ |\lambda_{i,A}| = 1 $ as $ h(\lambda_{i,A}) $. The goal is to choose $ h(\lambda_{i,A}) $ with all $ d+1 $ roots inside $ |\lambda_{i,A}| = 1 $, and ensure the magnitude of $ g(\lambda_{i,A}) $ is strictly less than that of $ h(\lambda_{i,A}) $ on the unit circle. 
    
    Consider that $ p_i(\lambda_{i,A}) = h(\lambda_{i,A}) + g(\lambda_{i,A}) $, where $ h(\lambda_{i,A}) = \lambda_{i,A}^{d+1} - (1 - c \delta_i) \lambda_{i,A}^d $, $ g(\lambda_{i,A}) = -c h_i $. Roots of $ h(\lambda_{i,A}) = \lambda_{i,A}^d (\lambda_{i,A} - (1 - c \delta_i)) $ are 0 ($ d $ times) and $ 1 - c \delta_i $, inside $ |\lambda_{i,A}| = 1 $ if $ c < \frac{2}{\delta_i} $. On $ |\lambda_{i,A}| = 1 $, $ \lambda_{i,A} = e^{i\theta} $, and 
    \begin{align} \notag
        |h(\lambda_{i,A})| &= |e^{i d \theta} (e^{i\theta} - (1 - c \delta_i))| = |e^{i\theta} - (1 - c \delta_i)| \\  \notag
        %&= 1 \cdot |e^{i\theta} - (1 - c \delta_i)| \\ \notag
        &= \sqrt{\sin^2\theta + \cos^2\theta + (1 - c \delta_i)^2 - 2 \cos\theta (1 - c \delta_i)} \\ \notag
        &= \sqrt{2 + c^2 \delta_i^2 - 2 c \delta_i - 2 \cos\theta (1 - c \delta_i)} \\ \notag
        &= \sqrt{(1 - c \delta_i) (2 - 2 \cos\theta) + c^2 \delta_i^2} \\ \notag
        &= \sqrt{2 (1 - \cos\theta) (1 - c \delta_i) + (c \delta_i)^2},
    \end{align}
    which ranges from $ c \delta_i $ (at $ \cos\theta = 1 $) to $ \sqrt{4 (1 - c \delta_i) + (c \delta_i)^2} $ (at $ \cos\theta = -1 $). Meanwhile, $ |g(\lambda_{i,A})| = c h_i $. The minimum of $ |h(\lambda_{i,A})| $ is $ c \delta_i $, and since $ \delta_i > h_i = \delta_i - \lambda_{i,L} $ for $ i > 1 $ (as $ \lambda_{i,L} > 0 $), we have $ c \delta_i > c h_i $, ensuring $ |h(\lambda_{i,A})| > |g(\lambda_{i,A})| $ everywhere on the circle for $ i > 1 $. Rouché’s theorem applies if $ |g| < |h| $ on $ |\lambda_{i,A}| = 1 $, which is satisfied. Since $ h(\lambda_{i,A}) $ has $ d+1 $ roots inside $ |\lambda_{i,A}| < 1 $ when $ c < \frac{2}{\delta_i} $, $ p_i(\lambda_{i,A}) $ has all $ d+1 $ roots inside the unit circle.
    
    For $ i = 1 $, the inequality fails ($ \delta_1 = v_{1,L}^T H v_{1,L} $) since $ \lambda_{1,L} = 0 $, but we do not apply Rouché’s theorem here, as the roots of $ p_1(\lambda_{1,A}) $ were computed directly (one root at $ \lambda_{1,A} = 1 $ and $ d $ roots inside $ |\lambda_{1,A}| < 1 $), and the consensus mode is handled separately to ensure $ \lambda_{1,A} = 1 $ is on the unit circle while $ c d \delta_1 < 1 $. 

    To find the bound on $ \delta_i $, note that since $ V^T V = I $, we have $ \sum_{j=1}^n (v_{i,L}(j))^2 = 1 $, and since $ \delta_i = v_{i,L}^T \Delta v_{i,L} = \sum_{j=1}^n \deg(j) (v_{i,L}(j))^2 $, we have $ 0 \leq \delta_i \leq \Delta_{\max} $, as $ \deg(j) \leq \Delta_{\max} $.

    Combining conditions, for $ i = 1 $, $ c < \frac{1}{d \delta_1} $ ensures one root at 1 and others inside the unit circle. For $ i > 1 $, $ c < \frac{2}{\delta_i} $, and since $ \delta_i \leq \Delta_{\max} $, $ c < \frac{2}{\Delta_{\max}} $ suffices. Thus, the system converges to consensus if $ 0 < c < \min\left( \frac{1}{d \delta_1}, \frac{2}{\Delta_{\max}} \right) $.

\section{Proving the Convergence of Lemma~\ref{lem:Lyapunov_Stability_with_LMI}}
    Consider the following Lyapunov-Krasovskii function for each component $ j $:
    \begin{equation}
        V_j(k) = z(k)^T Q P Q z(k),
    \end{equation}
    with total Lyapunov function given by $ V(k) = \sum_{j=1}^m V_j(k) $.
    The Lyapunov stability conditions are verified analytically using the proposed Lyapunov function. Specifically, it is shown that the function is zero at consensus, prove the positive definiteness of $V_j(k)$ in the disagreement subspace, and show that the Lyapunov difference $\Delta V(k)$ is negative.
    
    At consensus, where all agents' states are equal (i.e., $ w_j^i(k - d) = \alpha_j $ for all $ i $ and $ d = 0, \ldots, \tau_{\max} $), the augmented state is
    \begin{align*}
        z(k) &= [w_j(k)^T, w_j(k-1)^T, \ldots, w_j(k - \tau_{\max})^T]^T \\ \notag
        &= [\alpha_j \mathbf{1}^T, \alpha_j \mathbf{1}^T, \dots, \alpha_j \mathbf{1}^T] = \alpha_j \mathbf{1}_{n (\tau{\max} + 1)}.
    \end{align*}
    $ Q z(k) = 0 $ because $ z(k) $ lies in the consensus direction, and its projection onto the disagreement subspace would result in zero. This occurs because $ Q = I - \frac{1}{n (\tau_{\max} + 1)} \mathbf{1}_{n (\tau{\max} + 1)} \mathbf{1}_{n (\tau{\max} + 1)}^T $ projects any vector onto the subspace orthogonal to $ \mathbf{1}_{n (\tau{\max} + 1)} $. At consensus, 
    \[
        Q z(k) = Q (\alpha_j \mathbf{1}_{n (\tau{\max} + 1)}) = \alpha_j Q \mathbf{1}_{n (\tau{\max} + 1)}.
    \]
    Developing $ Q \mathbf{1}_{n (\tau{\max} + 1)} $, we have
    \begin{align*}
        &Q \mathbf{1}_{n (\tau{\max} + 1)} = \mathbf{1}_{n (\tau{\max} + 1)}\\
        &- \frac{1}{n (\tau_{\max} + 1)} \mathbf{1}_{n (\tau{\max} + 1)} \mathbf{1}_{n (\tau{\max} + 1)}^T \mathbf{1}_{n (\tau{\max} + 1)} \\ \notag
        &= \mathbf{1}_{n (\tau{\max} + 1)} - \frac{1}{n (\tau_{\max} + 1)} \cdot n (\tau_{\max} + 1) \mathbf{1}_{n (\tau{\max} + 1)} \\ \notag 
        &= \mathbf{1}_{n (\tau{\max} + 1)} - \mathbf{1}_{n (\tau{\max} + 1)} = 0,
    \end{align*}
    so $ Q z(k) = 0 $. Thus $ V_j(k) = z(k)^T Q P Q z(k) = 0 $, ensuring $ V(k) = 0 $.
    
    To prove the positive definiteness, consider $ V_j(k) = z(k)^T Q P Q z(k) $, with 
    \begin{equation} \label{eq:P_definition}
        P = \sum_{d=0}^{\tau_{\max}} C_d (L + \delta I_n) C_d^T , 
    \end{equation}
    and $ \delta $ a positive scalar chosen to ensure $ P $ is positive definite. To show $ Q P Q $ is positive definite in the orthogonal subspace, note that $ L + \delta I_n $ has eigenvalues shifted by $ \delta $, making it positive definite. The matrix $ C_d $ extracts the $ d $-th block, so $ C_d (L + \delta I_n) C_d^T $ places $ L + \delta I_n $ in the $ (d+1, d+1) $-th block, and summing gives $ P = \text{diag}(L + \delta I_n, \ldots, L + \delta I_n) $. For $ z \neq 0 $, 
    \[ z^T P z = \sum_{d=0}^{\tau_{\max}} z_d^T (L + \delta I_n) z_d > 0 
    \]
    unless all $ z_d $ are proportional to $ \mathbf{1} $. Applying $ Q $, $ Q z = 0 $ if $ z = \alpha \mathbf{1}_{n (\tau_{\max} + 1)} $, and for $ Q z \neq 0 $, $ z^T Q P Q z > 0 $ since $ P > 0 $ and $ Q $ projects onto a subspace where $ L z_d \neq 0 $ unless $ z_d = \alpha \mathbf{1} $, indicating disagreement among agents. Thus, $ V_j(k) > 0 $ for $ Q z(k) \neq 0 $, and $ V(k) > 0 $. 
    
    To prove the negative difference, compute 
    \begin{align*} 
        V_j(k+1) &= z(k+1)^T Q P Q z(k+1) \\ \notag 
        &= (A z(k))^T Q P Q (A z(k)) \\ \notag
        &= z^T(k) A^T Q P Q A z(k),
    \end{align*} 
    so the difference $ \Delta V_j(k) $ is
    \begin{align*}
        \Delta V_j(k) &= z^T(k) A^T Q P Q A z(k) - z(k)^T Q P Q z(k)\\ \notag
        &= z^T(k) (A^T Q P Q A - Q P Q) z(k),
    \end{align*}
    \begin{equation} \notag
        \Delta V(k) = \sum_{j=1}^m \Delta V_j(k) = m z(k)^T (A^T Q P Q A - Q P Q) z(k),
    \end{equation}
    At consensus, $ A z(k) = z(k) $, so $ \Delta V(k) = 0 $. 
    
    For Lyapunov stability, it is required that $ \Delta V(k) < 0 $ when $ Q z(k) \neq 0 $ (i.e., outside consensus). This implies the LMI condition $ A^T Q P Q A - Q P Q < 0 $ in the disagreement subspace, which must hold for all $ z(k) $ such that $ Q z(k) \neq 0 $, where $P$ is defined in \eqref{eq:P_definition} with $ \delta > 0 $ ensuring positive definiteness.

%\section*{Acknowledgment}

%ChatGPT and Grammarly were used to assist in ensuring the manuscript is free of grammatical, punctuation, and tone-related errors.

\bibliographystyle{IEEEtran}
\bibliography{Mybib}

\end{document}